\newcommand{\R}{\mathbb{R}}
\newcommand{\X}{\mathbf{X}}
\newcommand{\Xq}{\X^{(q)}}
\newcommand{\Xtildeq}{\widetilde{\X}^{(q)}}
\newcommand{\Y}{\mathbf{Y}}
\newcommand{\Yq}{\Y^{(q)}}
\newcommand{\betaq}{\beta^{(q)}}
\newcommand{\epsq}{\varepsilon^{(q)}}
\newcommand{\B}{\mathbf{B}}
\newcommand{\Bhat}{\widehat{\mathbf{B}}}
\newcommand{\Bhatsp}{\Bhat^{(sp)}}
\newcommand{\Bhatlr}{\Bhat^{(lr)}}
\newcommand{\cDq}{\mathcal{D}^{(q)}}
\newcommand{\cDqtilde}{\widetilde{\mathcal{D}}^{(q)}}
\newcommand{\Deltahat}{\widehat{\mathbf{\Delta}}}
\newcommand{\Shatq}{\widehat{\mathbf{S}}^{(q)}}
\newcommand{\Sigmatildeq}{\widetilde{\mathbf{\Sigma}}^{(q)}}
\newcommand{\Sigmaq}{\mathbf{\Sigma}^{(q)}}
\newcommand{\bbM}{\mathbb{M}}
\newcommand{\cA}{\mathcal{A}}
\newcommand{\cU}{\mathcal{U}}
\newcommand{\cV}{\mathcal{V}}
\newcommand{\cB}{\mathcal{B}}
\newcommand{\cP}{\mathcal{P}}
\newcommand{\cI}{\mathcal{I}}
\newcommand{\cC}{\mathcal{C}}
\newcommand{\cL}{\mathcal{L}}
\newcommand{\argmin}{\mathop{\rm arg\,min}}
\newcommand{\argmax}{\mathop{\rm arg\,max}}
\newcommand{\ev}[1]{\mathbb{E}\left[#1\right]}
\newcommand{\prob}[1]{\mathbb{P}\left\{#1\right\}}
\newcommand{\op}{\mathsf{op}}
\newcommand{\subG}{\mathsf{subG}}
\newcommand{\subE}{\mathsf{subE}}
\newcommand{\norm}[1]{\left\|#1\right\|}
\newcommand{\ip}[1]{\langle #1 \rangle}
\newcommand{\set}[1]{\left\{#1\right\}}
\newtheorem{assumption}{Assumption}[section]
\newtheorem{theorem}{Theorem}[section]
\newtheorem{lemma}{Lemma}[section]
\newtheorem{definition}{Definition}[section]
\newtheorem{proposition}{Proposition}[section]
\title{Multi-Task Learning with Summary Statistics}
\author{%
  Parker Knight and Rui Duan\footnote{Corresponding author: \texttt{rduan@hsph.harvard.edu}} \\
  \quad \\
  Department of Biostatistics\\
  Harvard University\\
  Boston, MA
  % \AND
  % Coauthor \\
  % Affiliation \\
  % Address \\
  % \texttt{email} \\
  % \And
  % Coauthor \\
  % Affiliation \\
  % Address \\
  % \texttt{email} \\
  % \And
  % Coauthor \\
  % Affiliation \\
  % Address \\
  % \texttt{email} \\
}
\date{February 2024}
\begin{document}

\maketitle

\begin{abstract}
Multi-task learning has emerged as a powerful machine learning paradigm for integrating data from multiple sources, leveraging similarities between tasks to improve overall model performance. However, the application of multi-task learning to real-world settings is hindered by data-sharing constraints, especially in healthcare settings. To address this challenge, we propose a flexible multi-task learning framework utilizing summary statistics from various sources. Additionally, we present an adaptive parameter selection approach based on a variant of Lepski's method, allowing for data-driven tuning parameter selection when only summary statistics are available. Our systematic non-asymptotic analysis characterizes the performance of the proposed methods under various regimes of the sample complexity and overlap.  We demonstrate our theoretical findings and the performance of the method  through extensive simulations. This work offers a more flexible tool for training related models across various domains, with  practical implications in genetic risk prediction and many other fields.
\end{abstract}

\begin{spacing}{1.5}

\section{Introduction}

The growing availability of extensive and intricate datasets presents an opportunity to integrate data from multiple sources. Multi-task learning has emerged as a promising machine learning approach that enables the simultaneous learning of multiple related models, leveraging shared structure between tasks to enhance the performance on each task individually \cite{zhang_overview_2018, zhang_survey_2022}.  In healthcare and biomedical research, the practical application of multi-task learning is often hindered by data-sharing constraints, which stem from concerns about the ownership and privacy of individual-level data \cite{vest_health_2010,duan_heterogeneity-aware_2021}. Patient data in these domains is typically sensitive and less likely to be publicly available or shared across study sites, limiting researchers' access to individual-level data from different domains.

To overcome this limitation, researchers have increasingly integrated summary statistics into analysis pipelines as a substitute for individual-level data \cite{mak_polygenic_2017, gu_commute_2023, gu_robust_2023}. Summary statistics are straightforward, interpretable measures derived from raw data that can offer insights into data distribution, variability, and relationships among variables. Furthermore, they can be aggregated across studies to facilitate data integration and reused in various research projects. Recently, the use of summary statistics has garnered interest in healthcare and biomedical research. For example, many genetic risk prediction methods rely on summary-level statistics such as associations from Genome-wide Association Studies (GWAS), Linkage Disequilibrium estimations (LD), and minor allele frequencies (MAFs) \cite{choi_tutorial_2020}. These summary statistics can help predict an individual's likelihood of developing specific diseases based on their genetic profile.

Inspired by a potential use case in genetic risk prediction, we propose a multi-task learning framework that enables simultaneous learning of multiple genetic risk prediction models using only publicly available summary statistics. Our proposed framework can be used in the context of predicting genetic risks for multiple traits leveraging potentially shared genetic pathways, and can also be used to develop trans-ethnic genetic risk prediction models that account for potential heterogeneity across populations, improving generalizability and real-world applicability. Beyond genetic risk prediction, the ability to learn from summary statistics offers a versatile tool for developing models across a wide range of domains, including healthcare, finance, and marketing.

To summarize, the contributions of this work are threefold: First, we propose a flexible multi-task learning framework which allows training multiple models simultaneously using basic summary statistics characterizing marginal relationship between outcomes and features, which are often publicly available. We allow summary statistics corresponding to each task to be generated from distinct or potentially overlapping samples.  Secondly, we conducted a systematic non-asymptotic analysis which characterizes how the performance of the proposed methods are influenced by the characteristics of summary statistics. In particular, we show that there are multiple regimes of performance depending on the sample complexity of the source datasets and their overlap. The theoretical results are supported with extensive simulations.
Lastly, We propose an adaptive scheme for tuning parameter selection based on the variant of Lepski's method \cite{lepski_optimal_1997} given in \cite{chichignoud_practical_2016}. This allows us to select a data-driven tuning parameter when only summary statistics are available and cross-validation is not feasible. We prove that tuning parameters chosen by this method satisfy an oracle inequality with high probability, and demonstrate the effectiveness of the method via simulations.

\subsection{Related work} 
% The authors of \cite{lounici_taking_2009} and \cite{lounici_oracle_2011} derive oracle inequalities and minimax rates for the $\ell_{2,1}$-regularized multitask problem. The oracle rates for the $\ell_{2,1}$ + low-rank estimator are given in \cite{bunea_joint_2012}. Inference for the $\ell_{2,1}$-regularized estimator is studied in the recent works of \cite{tan_noise_2022, bellec_chi-square_2021}.
% 
%  Multi-task learning as studied in the present work is a special case of trace regression, \cite{rohde_estimation_2011} and \cite{negahban_estimation_2011} provide guarantees for this problem under low-rank assumptions using the nuclear norm. The more recent work of \cite{hamidi_low-rank_2022} provides tail bounds for a cross-validated version of the trace regression estimator, and \cite{boursier_trace_2022} gives rates for the nuclear norm multitask estimator in data-sparse settings.
 The use of summary statistics for regression modeling has been considered in the statistical genetics literature \cite{choi_tutorial_2020}. The \texttt{lassosum} method for polygenic risk prediction was introduced by \cite{mak_polygenic_2017}, which considered fitting a $L_1$ penalized linear regression with summary statistics, and its theoretical properties were studied in depth by \cite{li_estimation_2022}. In \cite{chen_penalized_2021}, the authors extend these ideas to polygenic risk prediction with binary traits. The summary statistics used in these methods include the marginal associations between  genetic variants and phenotypes, and  statistics summarizing  the covariance structures among all genetic variants oftentimes derived from a reference genotype dataset. Empirical studies have demonstrated that the efficacy of such models is significantly influenced by the choices of the GWAS summary statistics and the reference dataset \cite{torkamani_personal_2018}. However, there's still limited theoretical understanding regarding how the overlap of samples and the inherent heterogeneity between datasets impact the model performance. Moreover, most current approaches devise models for a single trait within a single ancestral population. Considering shared genetic architectures could potentially enhance performance by employing a multi-task learning strategy \cite{martin_clinical_2019}.The authors of \cite{molstad_heterogeneity-aware_2023} take this approach, and describe a multi-task estimator for multi-ancestry pQTL analysis. However, they do not consider the setting when only summary statistics are available for each task.

 Our methods build upon classical multi-task learning techniques, and enable fitting models only using basic summary statistics which are often made publicly available. The sparse 
 regularized estimator extends the group-sparse estimators studied in \cite{lounici_taking_2009, lounici_oracle_2011}, while the nuclear norm estimator expands on the low-rank regression model described in \cite{negahban_estimation_2011}. The nuclear norm approach is closely related to the linear representation learning problem \cite{du_few-shot_2021, tripuraneni_provable_2021}, which constrains the regression coefficients to a shared low-dimensional subspace.

 %\cite{zhao_block-wise_2022} studies the risk prediction problem under a dense effect model, and provide an estimator that uses only reference panel data. They show that the reference panel based estimator is sub-optimal in the high dimensional limit (see Theorems 3 and 4, as well as Figure 3 and the subsequent discussion), which is analogous to our results.

Another recent line of work studies the multi-task learning problem under data-sharing constraints. In \cite{liu_privacy-preserving_2018}, the authors describe a federated multi-task learning linear regression model for privacy-preserving data analysis. Similarly, \cite{cao_dsmtl_2022} presents a computational framework for multi-task learning under DataSHIELD \cite{gaye_datashield_2014} constraints. The formulation of these methods is conceptually similar to ours, but they do not provide theoretical guarantees for their estimators, and we consider a more flexible setting where the summary statistics can be derived from different sources.

% The works of \cite{wu_theoretical_2020, xie_privacy-preserving_2017, liang_model-protected_2019} study the multi-task learning problem under differential privacy, which imposes stronger privacy constraints than those considered in the present work. \red{PK: Should we comment as to why we don't study the problem under DP?}. 

Finally, our methods are closely related to the one-shot federated learning paradigm, in which only one round of communication is permitted between the primary local research site and additional sites. \cite{duan_learning_2020} presents an algorithm for fitting logistic regression models using summary statistics from different research sites. The works of \cite{luo_dlmm_2022, yan_privacy-preserving_2023} extend these ideas to linear mixed effects models and generalized mixed effects models, respectively. \cite{luo_odach_2022} presents a federated algorithm for fitting the Cox proportional hazards model, and \cite{li_targeting_2021} studies federated transfer learning methods for fitting generalized linear models. Nevertheless, the summary statistics addressed in our research are frequently reported in existing studies and can be employed across various models. This is in contrast to the one-shot federated algorithm,  where the summary statistics are model specific and the implementation relies on the infrastructure of a collaborative environment.

\section{Problem setup and methods}
Consider the setting where we are interested in learning a total of $Q$  tasks simultaneously. For each $q\in[Q]$, we posit the linear model 
\[\Yq = \Xq \betaq + \epsq\]
where $\Yq \in \R^{n_q}$, $\Xq \in \R^{n_q \times p}$, and $\epsq$ is mean-zero random noise. Each index $q$ corresponds to the $q_{th}$ task. The dataset $\cDq = (\Yq, \Xq)$ contains the individual-level observations of the outcome and features respectively for the $q_{th}$ task. 
We consider the generic setting where  the features $\cDq$ might be collected from either  overlapping or non-overlapping samples across tasks. Our estimand of interest is the matrix $\B^* = [\beta^{(1)}, ..., \beta^{(Q)}] \in \R^{p \times Q}$, where the $q_{th}$ column of $\B^*$ is $\betaq$. Furthermore, let $e_i$ denote the $i_{th}$ standard basis vector, so that $\beta^{(q)} = \B^*e_q$.

 If all the individual-level observations $\cDq$ are available, a natural estimator of $\B^*$ is the regularized multi-task least-squares estimator 

\begin{equation}\label{eq:discovery-estimator}
\Bhat = \argmin_{\B}\left\{\sum_{q \in [Q]}\frac{1}{2n_q}\norm{\Yq - \Xq \B e_q}_2^2 + \lambda\cP(\B)\right\}
\end{equation}
where $\cP$ is a suitable penalty, chosen to enforce similarity structure between tasks, with tuning parameter $\lambda > 0$. 
 However, in many applications, we are less likely to observe $\cDq$. 
Rather, summary statistics  which contains information of the  feature-outcome  and feature-feature relationships may be more likely to be made publicly available. %For example, many genome-wide or phenome-wide studies report marginal associations between single genetic variants and traits derived from linear models. In addition, the covariance among genetic variants are often derived from a public available reference genotypic dataset such as the 1000 genome project \cite{siva20081000}. 
Motivated by the use case in genetic risk prediction, we assume that only summary statistics 
$\Shatq$ and $\Sigmatildeq$ are observable,  where $\Shatq = \frac{1}{n_q}(\Xq)^{\top}\Yq$ are derived from  $\{\Xq,\Yq\}$, which we termed as the discovery data, and $\Sigmatildeq = \frac{1}{\tilde{n}_q}(\Xtildeq)^{\top}\Xtildeq$ is a sample covariance matrix computed from the proxy data ${\Xtildeq\in\R^{\tilde n_q \times p}}$, which may or may not have overlap with $\Xq$. 

Our goal  is to estimate $\B^*$
using two sets of summary statistics $\Shatq$ and $\Sigmatildeq$. We note that $\Xtildeq$ is not necessarily equal to $\Xq$. 
In practice, the studies which report $\Shatq$ may not be the same as the ones reporting $\Sigmatildeq$. Intuitively, we hope that $\Xtildeq$ is generated from a similar population as $\Xq$, but this may not hold in general. In Section \ref{sec:theory},  our theoretical analysis reveals how the overlap between $\Xtildeq$ and $\Xq$ and their distributional shift can influence the accuracy of multi-task learning.

To construct an estimator that uses only the information provided by $\cDqtilde$, we notice that the least-squares loss can be written as 

\[\mathcal{L}(\beta) = \norm{\Y - \X\beta}_2^2 = \Y^{\top}\Y - 2\langle\beta, \X^{\top}\Y\rangle + \beta^{\top}\X^{\top}\X\beta\]

By dropping the constant term, we arrive at a loss function that can be computed using only summary-level information, namely the matrices $\X^{\top}\Y$ and $\X^{\top}\X$. This motivates our general strategy for constructing an estimator only using summary statistics: we substitute $\Shatq$ and $\Sigmatildeq$ where appropriate in each least-square loss function in Equation \ref{eq:discovery-estimator} and arrive at the following optimization problem.

\begin{equation}\label{eq:proxy-estimator-general}
    \Bhat = \argmin_{\B}\left\{\sum_{q \in [Q]}\frac12\norm{(\Sigmatildeq)^{1/2}\B e_q}_2^2 - \langle\Shatq, \B e_q \rangle + \lambda\cP(\B)\right\}
\end{equation}

There are many possible choices  of $\cP$  for enforcing structure similarities across tasks. For instance, the recent works of \cite{tian_learning_2023} and \cite{gu_robust_2023} study low-rank and angle-based penalties for enforcing a shared orientation among the task-specific parameters. In this work, we study two estimators obtained under the $\ell_{2,1}$ norm penalty, denoted $\norm{.}_{2,1}$, and the nuclear norm penalty, denoted $\norm{.}_*$. These penalties are chosen for their intuitive interpretation: the $\ell_{2,1}$ penalty is more likely to be effective if a  common set of variables are active across the tasks. If the task-specific parameters tend to be  ``correlated'', in the sense that they lie in a low-dimensional subspace, the nuclear norm penalty is preferred. In practice, certain domain knowledge can be incorporated to determine the penalty structure, or it can be chosen in a data-driven way in the existence of a validation dataset.

The corresponding estimators are expressed as follows:

\begin{equation}\label{eq:proxy-estimator-sparse}
\Bhatsp = \argmin_{\B}\left\{\sum_{q \in [Q]}\frac12\norm{(\Sigmatildeq)^{1/2}\B e_q}_2^2 - \langle\Shatq, \B e_q \rangle + \lambda \norm{\B}_{2,1}\right\}
\end{equation}

\begin{equation}\label{eq:proxy-estimator-lowrank}
\Bhatlr = \argmin_{\B}\left\{\sum_{q \in [Q]}\frac12\norm{(\Sigmatildeq)^{1/2}\B e_q}_2^2 - \langle\Shatq, \B e_q \rangle + \lambda \norm{\B}_{*}\right\}
\end{equation}

The superscripts $(sp)$ and $(lr)$ stand for ``sparse'' and ``low-rank'' respectively.

\section{Theoretical guarantees}\label{sec:theory}
Before presenting our theoretical results, we first introduce the relevant notation. Let $N$ denote the total size of discovery observations and proxy observation across all $Q$ tasks. Formally, 

\[N = \sum_{q =1}^Q(n_q + \tilde{n}_q)\]

We note that $N$ may double-count individuals who are part of both the proxy data and the discovery data. %Let $\mathbb{X} = \set{X_i : i \in [N]}$ denote the complete set of covariate vectors, where each $X \in \mathbb{X}$ is a row of $\Xq$ or $\Xtildeq$ for some $q$.% 
Define the subset $\cI_q \subset [N]$ as the index set for the discovery data points in the $q_{th}$ task; in other words $i \in \cI_q$ implies $X_i \in \R^p$ is a row of $\Xq$. We define $\widetilde{\cI}_q$ analogously for the proxy data; $i \in \widetilde{\cI}_q$ implies $X_i$ is a row of $\Xtildeq$. Let $\tilde{\rho}_q = |\cI_q \cap \widetilde{\cI}_q| / \tilde{n}_q$ denote the proportion of proxy samples which are also in the discovery dataset for the $q_{th}$ task. In the results that follow, let 

\[\gamma_q = 1 + \norm{\beta^{(q)}}_2^2(\frac{n_q}{\tilde{n}_q} + 1 - 2\tilde{\rho}_q)\]

and take $\gamma = \max_q\gamma_q$. Additionally, let $\Xi \in \R^{p \times Q}$ be the matrix with its $q_{th}$ column equal to $(\Sigmaq_1 - \Sigmaq_2)\beta^{(q)}$, where $\Sigmaq_1$ and $\Sigmaq_2$ are the population-level covariance matrices of $\Xq$ and $\Xtildeq$ respectively. The quantities $\gamma$ and $\Xi$ play important roles in our results that follow. In particular, $\gamma$ is a multiplicative factor in our bounds that represents the cost of using proxy data rather than individual-level data. Similarly, $\Xi$ will represent the cost of using a proxy dataset with a distributional shift from the discovery data. Finally, we will let $n_{\min}$ and $\tilde{n}_{\min}$ denote the smallest sample size of discovery and proxy data, respectively. All proofs are given in the supplement.

\subsection{Guarantees for $\ell_{2,1}$-norm estimator}

In this section, we formally state our assumptions and results for the $\Bhatsp$ estimator. The assumptions are standard for high-dimensional regularized estimators, see \cite{negahban_unified_2012} for a deeper discussion of these conditions.

\begin{assumption}[Sub-gaussian design and noise]\label{as:distribution}
The following holds for each $q\in [Q]$: The rows of $\Xq$ are independent and identically distributed according to a sub-Gaussian distribution with covariance matrix $\Sigmaq_1 \in \R^{p \times p}$. Similarly, the rows of $\Xtildeq$ are independent and identically distributed according to a sub-Gaussian distribution with covariance $\Sigmaq_2 \in \R^{p \times p}$. The matrices $\Sigmaq_1$ and $\Sigmaq_2$ have bounded eigenvalues. The entries of $\epsq$ are independent and identically distributed according to a sub-Gaussian distribution with parameter $\sigma^2$. The $\Xq$ and $\epsq$ are independent of one another.
\end{assumption}

\begin{assumption}[Shared support]\label{as:sparsity}
There exists a subset $S^* \subset [p]$ such that $\mathsf{supp}(\beta^{(q)}) = S^*$ for each $q$.
\end{assumption}

\begin{definition}[Sparse cone]\label{def:sparse-cone}
    For any $S \subset [p]$, let 
    \[\cC_{\alpha}(S) = \set{\Delta \in \R^{p \times Q} : \norm{\Delta_{S^c}}_{2,1} \leq \alpha \norm{\Delta_S}_{2,1}}\]
\end{definition}

\begin{assumption}[Restricted strong convexity]\label{as:rsc-sparsity}
There exists a constant $\kappa > 0$ and a sequence $a_N \rightarrow 0$ as $N \rightarrow \infty$ such that the following inequality holds for each $\Delta \in \cC_3(S^*)$ with probability at least $1 - a_N$:

\[\sum_{q = 1}^Q\norm{(\Sigmatildeq)^{1/2}\Delta e_q}_2^2 \geq \frac{1}{\kappa}\norm{\Delta}_F^2\]

\end{assumption}

\begin{theorem}\label{thm:convergence-sparsity}
   Under assumptions \ref{as:distribution}, \ref{as:sparsity}, and \ref{as:rsc-sparsity}, there exist constants $c_1$ and $c_2$ depending only on the $\sigma^2$ and the eigenvalues of $\Sigmaq_1$ and $\Sigmaq_2$ such that if $n_{\min} \wedge \tilde{n}_{\min} \geq c_1\norm{\B^*}_{\infty, \infty}(Q + \log p)$ and $\lambda = O(\sqrt{\gamma(Q + \log p)/n_{\min}} + \norm{\Xi}_{2,\infty})$, the following inequality holds with probability at least $1 - e^{-\log p } - a_N$:

        \[\norm{\Bhatsp - \B^*}_F \leq c_2\left(\sqrt{\frac{\gamma s (Q + \log p)}{n_{\min}}} + \sqrt{s}\norm{\Xi}_{2,\infty}\right)\]
\end{theorem}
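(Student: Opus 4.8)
\emph{Proof plan.} The plan is to run the standard analysis of regularized $M$-estimators with a decomposable penalty (in the style of \cite{negahban_unified_2012}), the only nonstandard ingredient being probabilistic control of the ``score'' $\nabla\cL(\B^*)$ under sample overlap. Write $\cL(\B) = \sum_{q}\big(\tfrac12\norm{(\Sigmatildeq)^{1/2}\B e_q}_2^2 - \ip{\Shatq,\B e_q}\big)$ for the smooth part of the objective, so $\Bhatsp = \argmin_\B\{\cL(\B)+\lambda\norm{\B}_{2,1}\}$; since $\cL$ is exactly quadratic, its gradient has $q$th column $\nabla\cL(\B^*)e_q = \Sigmatildeq\betaq - \Shatq$, and $\cL(\B^*+\Delta)-\cL(\B^*)-\ip{\nabla\cL(\B^*),\Delta} = \tfrac12\sum_q\norm{(\Sigmatildeq)^{1/2}\Delta e_q}_2^2$, which by Assumption~\ref{as:rsc-sparsity} is $\geq \tfrac1{2\kappa}\norm{\Delta}_F^2$ for all $\Delta\in\cC_3(S^*)$ on an event of probability $\geq 1-a_N$. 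The two facts I would record first are that $\norm{\cdot}_{2,1}$ (sum of row $\ell_2$ norms) is decomposable over the subspace of matrices with rows supported on $S^*$, with dual norm $\norm{\cdot}_{2,\infty}$ (max of row $\ell_2$ norms), and that $\cL$ is convex.

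With these in hand the deterministic skeleton is routine. On the event $\cG = \{\lambda\geq 2\norm{\nabla\cL(\B^*)}_{2,\infty}\}$, optimality of $\Bhatsp$ plus convexity and decomposability give the cone containment $\Deltahat := \Bhatsp-\B^* \in \cC_3(S^*)$; combining the basic inequality $\cL(\Bhatsp)-\cL(\B^*) \leq \lambda(\norm{\B^*}_{2,1}-\norm{\Bhatsp}_{2,1})$ with the quadratic identity, the restricted strong convexity bound, and $\lambda\geq 2\norm{\nabla\cL(\B^*)}_{2,\infty}$ yields $\tfrac1{2\kappa}\norm{\Deltahat}_F^2 \leq \tfrac{3\lambda}{2}\norm{\Deltahat_{S^*}}_{2,1} \leq \tfrac{3\lambda}{2}\sqrt{s}\,\norm{\Deltahat}_F$ with $s=|S^*|$, hence $\norm{\Deltahat}_F \leq 3\kappa\sqrt{s}\,\lambda$. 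So it remains to verify that $\lambda$ of the stated order lies in $\cG$ with high probability, i.e.\ to bound $\norm{\nabla\cL(\B^*)}_{2,\infty}$.

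The substance, and the main obstacle, is this bound. Using $\Shatq = \tfrac1{n_q}(\Xq)^\top\Xq\betaq + \tfrac1{n_q}(\Xq)^\top\epsq$ and $\Sigmatildeq = \tfrac1{\tilde n_q}(\Xtildeq)^\top\Xtildeq$, the $q$th column of $\nabla\cL(\B^*)$ splits into the noise term $-\tfrac1{n_q}(\Xq)^\top\epsq$, two centered quadratic-form terms $\big(\tfrac1{\tilde n_q}(\Xtildeq)^\top\Xtildeq-\Sigmaq_2\big)\betaq$ and $\big(\Sigmaq_1-\tfrac1{n_q}(\Xq)^\top\Xq\big)\betaq$, and the deterministic shift $(\Sigmaq_2-\Sigmaq_1)\betaq = -\Xi e_q$. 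Because $\Xtildeq$ and $\Xq$ share $\tilde\rho_q\tilde n_q$ rows, these three stochastic pieces are dependent and cannot be bounded separately; instead I would fix a row index $j$ and write the stochastic part of the $(j,q)$ entry as a single sum over the \emph{disjoint} sample groups --- shared between discovery and proxy, discovery-only, proxy-only --- the shared-sample summand being $\big(\tfrac1{\tilde n_q}-\tfrac1{n_q}\big)X_{ij}(X_i^\top\betaq) - \tfrac1{n_q}X_{ij}\epsilon_i$ and the non-shared ones the obvious simpler expressions. After centering this is a sum of independent mean-zero sub-exponential variables, and adding up the per-sample (conditional) variances --- with the cross terms from the shared group producing exactly the $-2\tilde\rho_q$ contribution --- collapses to $O(\gamma_q/n_q)$ up to $\sigma^2$ and eigenvalue constants; this is precisely where $\gamma_q = 1+\norm{\betaq}_2^2(n_q/\tilde n_q + 1 - 2\tilde\rho_q)$ enters. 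A Bernstein inequality controls each entry, the sample-size condition $n_{\min}\wedge\tilde n_{\min}\gtrsim\norm{\B^*}_{\infty,\infty}(Q+\log p)$ keeping it in the variance-dominated regime; a union bound over the $p$ rows and $Q$ tasks, followed by assembling the $Q$ entries of a row into its $\ell_2$ norm, gives $\norm{\nabla\cL(\B^*)+\Xi}_{2,\infty} \lesssim \sqrt{\gamma(Q+\log p)/n_{\min}}$ on an event of probability $\geq 1-e^{-\log p}$. By the triangle inequality $\norm{\nabla\cL(\B^*)}_{2,\infty}\lesssim\sqrt{\gamma(Q+\log p)/n_{\min}}+\norm{\Xi}_{2,\infty}$, so the advertised $\lambda$ is on $\cG$; plugging $\lambda\asymp\sqrt{\gamma(Q+\log p)/n_{\min}}+\norm{\Xi}_{2,\infty}$ into $\norm{\Deltahat}_F\leq 3\kappa\sqrt{s}\,\lambda$ and collecting the failure probabilities of $\cG^c$ and the RSC event ($e^{-\log p}+a_N$) finishes the proof.
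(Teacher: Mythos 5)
Your proposal is correct and follows essentially the same route as the paper: the same decomposable-regularizer/RSC skeleton yielding $\norm{\Deltahat}_F \lesssim \sqrt{s}\,\lambda$, and the same key computation for the score, namely splitting $\Sigmatildeq\betaq - \Shatq$ over the disjoint shared, discovery-only, and proxy-only sample groups so that the $(1/\tilde n_q - 1/n_q)^2$ coefficient on the overlapping samples produces the $-2\tilde\rho_q$ term in $\gamma_q$, followed by sub-exponential concentration and a union bound over rows. The only cosmetic difference is that you run Bernstein entrywise and then assemble row $\ell_2$ norms, whereas the paper works directly with $\subE_Q$ row vectors and a norm-concentration lemma; the resulting bound and failure probability are identical.
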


In the subsequent discussion, we take $q^* = \argmax_{q\in [Q]}\gamma_q$ and $(n,\tilde{n}, \tilde{\rho}) = (n_{q^*}, \tilde{n}_{q^*}, \tilde{\rho}_{q^*})$ so that the triplet $(n, \tilde{n}, \tilde{\rho})$ corresponds to the same sizes and overlap factor used to compute $\gamma$.

There are three main quantities in this upper bound that are of novel interest: the ratio of discovery data size to proxy data size $n/\tilde{n}$, the proportion of overlap between the discovery and proxy data $\tilde{\rho}$, and the error in specifying the proxy data distribution $\Xi = (\mathbf{\Sigma}^{(1)} - \mathbf{\Sigma}^{(2)})\beta$. The first two of these are captured by the factor $\gamma$. Our results show that with fixed  $n$ and $\tilde{n}$, the larger proportion of overlap leads to better estimation accuracy. When the proxy data and discovery data are precisely the same,  meaning that $\cI_q = \tilde{\cI}_q$ for all $q$, we recover the minimax rate of estimation for the $\ell_{2,1}$ penalized multi-task learning problem established by Theorem 6.1 of \cite{lounici_oracle_2011}. 
If the proxy data and the discovery data are disjoint, meaning that $\cI_q \cap \tilde{\cI}_q = \emptyset$ for all $q$, the error is increased relative to the minimax rate by a factor of $(1 + n/\tilde{n})\norm{\beta}_2^2$. This recovers the result of Theorem 2.1 in \cite{li_estimation_2022} up to a constant factor, assuming that $\Xi = 0$. %In this case, even when the sample size $\tilde n_q$ is infinitely large, meaning that we can estimate the population covariance matrices for all $q$ tasks precisely, the error is still increased by $ \norm{\beta}_2^2$, which is worse than the fitting the multi-task learning directly on individual-level data.
The novelty of Theorem \ref{thm:convergence-sparsity} is that we are able to characterize the convergence rate of $\Bhatsp$ for any values of $n/\tilde{n}, \tilde{\rho}$, and $\Xi$. Additionally, we emphasize that the form of the $\gamma$ term implies that a price is paid anytime when $\Xq$ is not fully contained in $\Xtildeq$. 
Indeed, if $\tilde{\rho} < 1/2$ and we take $\tilde{n} \rightarrow \infty$ we still have that $\gamma > 1$ as long as the signal is nonzero. Counter-intuitively, this indicates that an oracle model which has full access to the population-level covariance matrix of the covariates will perform worse in terms of estimation error than an estimator which has access to individual-level data. Furthermore, if $\tilde{\rho} > 1/2$, our theorem predicts that the estimator will out-perform the oracle estimator that uses the population covariance matrix. These phenomena are validated in our simulation studies in Section \ref{sec:experiments}. 

\subsection{Guarantees for the nuclear norm estimator }

Now we state our results for the low-rank proxy data estimator, when the penalty is taken to be the nuclear norm. Once again, these assumptions are standard for high-dimensional regression problems with the nuclear norm \cite{negahban_unified_2012, wainwright_high-dimensional_2019}.

\begin{assumption}[Low rank]\label{as:low-rank}
    The matrix $\B^*$ has rank $r << p \wedge Q$. Let $\cU^*$ and $\cV^*$ denote the column space and row space of $\B^*$ respectively. Note that $\cU^*$ and $\cV^*$ each have dimension $r$. 
\end{assumption}

\begin{definition}[Subspaces]\label{def:subspaces}
    Let $\cU$ denote a dimension $k \leq p \wedge Q$ subspace of $\R^p$, and let $\cV$ denote a dimension $k \leq p \wedge Q$ subspace of $\R^Q$. Define 

    \[\bbM = \bbM(\cU, \cV) := \set{\Delta \in \R^{p \times Q} : \mathsf{row}(\Delta) = \cV, \mathsf{col}(\Delta) = \cU}\]
    \[\bbM^{\perp} = \bbM^{\perp}(\cU, \cV) = \set{\Delta \in \R^{p \times Q}: \mathsf{row}(\Delta) \perp \cV, \mathsf{col}(\Delta) \perp \cU}\]

    Furthermore, for any subspace $\Omega$ of $\R^{p \times Q}$, let $\Delta_{\Omega}$ denote the projection of $\Delta$ onto $\Omega$.

    We will denote $\bbM^* = \bbM(\cU^*, \cV^*)$.
\end{definition}

\begin{definition}[Low rank cone]\label{def:low-rank-cone}
    For any set $\bbM$ as defined above, let 

    \[\cC_{\alpha}(\bbM) = \set{\Delta \in \R^{p \times Q} : \norm{\Delta_{\bbM^{\perp}}}_* \leq \alpha \norm{\Delta_{\bbM}}_*}\]
\end{definition}

\begin{assumption}[Restricted strong convexity]\label{as:rsc-low-rank}
There exists a constant $\kappa > 0$ and a sequence $b_N \rightarrow 0$ as $N \rightarrow \infty$ such that the following inequality holds for each $\Delta \in \cC_3(\bbM^*)$ with probability at least $1 - b_N$:
    
    \[\sum_{q = 1}^Q\norm{(\Sigmatildeq)^{1/2}\Delta e_q}_2^2 \geq \frac{1}{\kappa}\norm{\Delta}_F^2\]
\end{assumption}

\begin{theorem}\label{thm:convergence-low-rank}
    Under assumptions \ref{as:distribution}, \ref{as:low-rank}, and \ref{as:rsc-low-rank}, there exist constants $c_1$ and $c_2$ depending only on $\sigma^2$ and the eigenvalues of $\Sigmaq_1$ and $\Sigmaq_2$ such that if $n_{\min} \wedge \tilde{n}_{\min} \geq c_1\norm{\B^*}_{\infty, \infty}(Q + p)$ and $\lambda = O(\sqrt{\gamma(Q + p)/n_{\min}} + \norm{\Xi}_{\op})$, the following inequality holds with probability at least $1 - e^{-p} - b_N$:

    \[\norm{\Bhatlr - \B^*}_F \leq c_2\left(\sqrt{\frac{r\gamma(Q + p)}{n_{\min}}} + \sqrt{r}\norm{\Xi}_{\op}\right)\]
\end{theorem}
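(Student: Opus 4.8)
The plan is to follow the standard $M$-estimation / restricted strong convexity recipe of \cite{negahban_unified_2012}, adapted to the summary-statistics loss. Write $\Deltahat = \Bhatlr - \B^*$. The first step is the \emph{deterministic} part: because $\Bhatlr$ minimizes the objective in \eqref{eq:proxy-estimator-lowrank}, a basic inequality comparing the objective at $\Bhatlr$ and at $\B^*$ gives $\tfrac12 \sum_q \norm{(\Sigmatildeq)^{1/2}\Deltahat e_q}_2^2 \le \langle \cE, \Deltahat\rangle + \lambda(\norm{\B^*}_* - \norm{\Bhatlr}_*)$, where $\cE$ is the ``effective noise'' matrix whose $q_{th}$ column is $\Shatq - \Sigmatildeq \beta^{(q)}$. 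Decomposability of the nuclear norm with respect to $(\bbM^*, \bbM^{*\perp})$ then yields $\norm{\B^*}_* - \norm{\Bhatlr}_* \le \norm{\Deltahat_{\bbM^*}}_* - \norm{\Deltahat_{\bbM^{*\perp}}}_*$, and choosing $\lambda \ge 2\norm{\cE}_{\op}$ forces $\Deltahat \in \cC_3(\bbM^*)$, so Assumption~\ref{as:rsc-low-rank} applies. Combining RSC with the basic inequality and the bound $\langle \cE,\Deltahat\rangle \le \norm{\cE}_{\op}\norm{\Deltahat}_* \le \norm{\cE}_{\op}\cdot 4\sqrt{2r}\,\norm{\Deltahat}_F$ (since $\Deltahat\in\cC_3(\bbM^*)$ and $\operatorname{rank}(\Deltahat_{\bbM^*})\le 2r$) gives, after rearranging, $\norm{\Deltahat}_F \lesssim \kappa \sqrt{r}\,\lambda$.

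The second, and main, step is the \emph{probabilistic} control of the operator norm of the effective noise, $\norm{\cE}_{\op}$, which is exactly what dictates the choice of $\lambda$ and hence the final rate. Here I would split each column $\Shatq - \Sigmatildeq\beta^{(q)} = \tfrac1{n_q}(\Xq)^\top\epsq + \big(\tfrac1{n_q}(\Xq)^\top\Xq - \Sigmatildeq\big)\beta^{(q)}$. The first term is a standard sub-Gaussian score; stacking across $q$, its operator norm is $O(\sqrt{\sigma^2(Q+p)/n_{\min}})$ by a covering-number / matrix-Bernstein argument. The second term is where the novel quantities enter: write $\tfrac1{n_q}(\Xq)^\top\Xq - \Sigmatildeq = \big(\tfrac1{n_q}(\Xq)^\top\Xq - \Sigmaq_1\big) - \big(\Sigmatildeq - \Sigmaq_2\big) + (\Sigmaq_1 - \Sigmaq_2)$. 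The last, deterministic piece contributes the bias column $(\Sigmaq_1-\Sigmaq_2)\beta^{(q)}$, i.e.\ the matrix $\Xi$, giving the $\norm{\Xi}_{\op}$ term. The two fluctuation pieces are sample-covariance deviations applied to $\beta^{(q)}$; controlling them requires a careful accounting of the \emph{overlap} between $\Xq$ and $\Xtildeq$, since the cross term $\operatorname{Cov}$ of $\big(\tfrac1{n_q}(\Xq)^\top\Xq - \Sigmaq_1\big)\beta^{(q)}$ and $\big(\Sigmatildeq - \Sigmaq_2\big)\beta^{(q)}$ is governed by $\tilde\rho_q$. This is precisely the computation that produces the factor $\gamma_q = 1 + \norm{\beta^{(q)}}_2^2(n_q/\tilde n_q + 1 - 2\tilde\rho_q)$: the ``$1$'' from the score term, the $n_q/\tilde n_q$ and $1$ from the two covariance fluctuations (rescaled to a common $1/n_q$ normalization), and the $-2\tilde\rho_q$ from their positive correlation on the shared samples. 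After taking a union bound over $q$ and maximizing, $\norm{\cE}_{\op} = O(\sqrt{\gamma(Q+p)/n_{\min}} + \norm{\Xi}_{\op})$ with probability at least $1 - e^{-p}$.

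The sample-size condition $n_{\min}\wedge\tilde n_{\min}\ge c_1\norm{\B^*}_{\infty,\infty}(Q+p)$ is what guarantees the sub-exponential tails of the quadratic forms $X_i^\top\beta^{(q)}$ concentrate in the relevant operator-norm event — essentially ensuring the sample covariance matrices are within a constant factor of their population counterparts on the directions that matter, so that the $O(\cdot)$ absorbing constants $c_1,c_2$ depend only on $\sigma^2$ and the eigenvalue bounds from Assumption~\ref{as:distribution}. Finally, plugging $\lambda \asymp \sqrt{\gamma(Q+p)/n_{\min}} + \norm{\Xi}_{\op}$ into $\norm{\Deltahat}_F \lesssim \kappa\sqrt{r}\,\lambda$ yields the stated bound, and the failure probability $e^{-p} + b_N$ is the union of the noise event and the RSC event.

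\textbf{Main obstacle.} The genuinely delicate step is the operator-norm bound on the covariance-fluctuation part of $\cE$ under partial overlap: one must handle a matrix whose columns are \emph{not} independent across $q$ only through the design (they are if the datasets are disjoint, but the overlap couples the discovery and proxy fluctuations within a task), and extract the exact linear-in-$\tilde\rho_q$ dependence rather than a crude bound. I expect this to require a decomposition of the overlapping samples into a shared block and task-specific blocks, separate matrix-Bernstein bounds on each, and bookkeeping of the variance proxies so that the constant in front is clean; everything else (decomposability, the cone argument, invoking RSC) is routine.
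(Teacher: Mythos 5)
Your proposal is correct and follows essentially the same route as the paper: a Negahban-style decomposability/RSC argument for the deterministic reduction to $\norm{\Deltahat}_F\lesssim\sqrt{r}\,\lambda$, followed by sub-exponential operator-norm concentration of the effective noise matrix $\nabla\cL(\B^*)$, whose mean is $\Xi$ and whose variance proxy carries the $\gamma_q$ factor. The block decomposition of the overlapping samples that you flag as the main obstacle is exactly how the paper's key lemma extracts the $n_q/\tilde n_q + 1 - 2\tilde\rho_q$ dependence, so your plan matches the paper's proof in all essential respects.
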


This theorem recovers precisely the same behavior with respect to $\gamma$ and $\Xi$ as Theorem \ref{thm:convergence-sparsity}. As $\gamma \rightarrow 1$, we achieve the minimax rate of estimation for low-rank regression as derived in \cite{rohde_estimation_2011} as long as $\Xi = 0$.

\section{Tuning parameter selection with Lepski's method}\label{sec:tuning}

A key challenge of applying penalized regression models to summary statistics is that model tuning based on data splitting (e.g., training and validation) is no longer an option. Model selection methods based on information criteria require knowing the log squared loss $ \log \|\Yq - \Xq\beta\|_2^2$, which cannot be recovered from $\Shatq$ and $\Sigmatildeq$ \cite{burnham_multimodel_2004}. To address this, we propose to use a tuning scheme based on Lepski's method \cite{lepski_optimal_1997}, a classical tool of nonparametric statistics for adaptive estimation with unknown tuning parameters. The authors of \cite{chichignoud_practical_2016} apply the ideas of Lepski to the LASSO, providing a fast algorithm for model tuning with non-asymptotic guarantees. In this section, we extend the methods in \cite{chichignoud_practical_2016} to tune the multi-task estimators described in the present work.

The results and ideas in this section apply to both $\Bhatsp$ and $\Bhatlr$, so without loss of generality, let $(\Bhat_{\lambda}, \cP)$ denote a generic estimator-regularizer pair with tuning paramter $\lambda$, which may refer to either $(\Bhatsp_{\lambda}, \norm{.}_{2,1})$ or $(\Bhatlr_{\lambda}, \norm{.}_*)$. Additionally, let $\cP^*$ denote the dual of $\cP$, meaning that 

\[\cP^*(X) = \sup_{Y:\cP(Y) \leq 1}\ip{X,Y}\]

Finally, we let $\cL$ denote the loss function for both estimators and let $\nabla \cL$ denote its gradient.

The intuition behind the adaptive tuning procedure is that the tuning parameter should be chosen large enough to control fluctuations in the gradient of the loss function, but not too large such that too much bias is incurred. \cite{negahban_unified_2012} articulates that the performance of regression estimator with a convex penalty is contingent on the following event occurring with high probability:

\[\cA(\lambda) = \set{\cP^*(\nabla \cL (\B^*))\leq \frac{\lambda}{2}}\]

where we use our problem's notation for continuity. The proofs of Theorem \ref{thm:convergence-sparsity} and \ref{thm:convergence-low-rank} involve showing that $\cA(\lambda)$ holds with high probability under our stated conditions. It is straightforward to prove the following proposition, which states that conditional on $\cA$, the gradient of the loss function at our generic estimator $\Bhat$ is close to the gradient at the true parameter $\B^*$.

\begin{proposition}\label{prop:dual-convergence}
    Let $(\Bhat_{\lambda}, \cP)$ denote a generic estimator-regularizer pair. Conditional on the event $\cA(\lambda)$, there exists a constant $C > 0$ such that the following inequality is satisfied almost surely:

    \[\cP^*(\nabla \cL(\Bhat_{\lambda}) - \nabla \cL(\B^*)) \leq C \lambda\]
\end{proposition}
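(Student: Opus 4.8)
The plan is to exploit the fact that $\cL$ is a quadratic loss, so its gradient is affine in $\B$. Writing $\cL(\B) = \sum_{q}\left(\frac12\norm{(\Sigmatildeq)^{1/2}\B e_q}_2^2 - \ip{\Shatq, \B e_q}\right)$, a direct computation gives $\nabla\cL(\B) e_q = \Sigmatildeq \B e_q - \Shatq$, hence $\nabla\cL(\Bhat_\lambda) - \nabla\cL(\B^*)$ is the matrix whose $q_{th}$ column is $\Sigmatildeq(\Bhat_\lambda - \B^*)e_q$. In particular the difference of gradients depends only on $\bDeltahat := \Bhat_\lambda - \B^*$, not on the noise directly, which is what makes the affine structure useful.

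First I would invoke the first-order optimality conditions for $\Bhat_\lambda$: since $\Bhat_\lambda$ minimizes $\cL(\B) + \lambda\cP(\B)$ and $\cP$ is convex, there exists a subgradient $Z \in \partial\cP(\Bhat_\lambda)$ with $\nabla\cL(\Bhat_\lambda) + \lambda Z = 0$, so $\nabla\cL(\Bhat_\lambda) = -\lambda Z$. By the standard duality between a norm and its dual, any element of $\partial\cP(\cdot)$ has dual norm at most $1$, so $\cP^*(\nabla\cL(\Bhat_\lambda)) \leq \lambda$. Next, on the event $\cA(\lambda)$ we have by definition $\cP^*(\nabla\cL(\B^*)) \leq \lambda/2$. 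Then I would apply the triangle inequality for the dual norm $\cP^*$ (which is a genuine norm, hence subadditive):
\[
\cP^*\!\left(\nabla\cL(\Bhat_\lambda) - \nabla\cL(\B^*)\right) \leq \cP^*\!\left(\nabla\cL(\Bhat_\lambda)\right) + \cP^*\!\left(\nabla\cL(\B^*)\right) \leq \lambda + \frac{\lambda}{2} = \frac{3\lambda}{2},
\]
which gives the claim with $C = 3/2$ (any $C \geq 3/2$ works).

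There is essentially no hard step here — the proposition is a bookkeeping consequence of the KKT conditions plus subadditivity of $\cP^*$. The only point requiring a little care is the claim that a subgradient of a norm has dual norm bounded by $1$; this is immediate from $\cP(Y) \geq \cP(\B) + \ip{Z, Y - \B}$ for all $Y$ together with positive homogeneity of $\cP$ (take $Y = \B + tW$, let $t \to \infty$ and $t \to -\infty$ to get $\ip{Z,W} \leq \cP(W)$ for all $W$, i.e. $\cP^*(Z) \leq 1$). I would state this as a one-line lemma or simply cite the standard subdifferential characterization of norms. Note the argument does not actually use the quadratic form of $\cL$ at all — only convexity of $\cP$ and the event $\cA(\lambda)$ — so the proposition holds verbatim for both $(\Bhatsp_\lambda, \norm{.}_{2,1})$ and $(\Bhatlr_\lambda, \norm{.}_*)$, and indeed the "almost surely" in the statement is because the bound is deterministic once we condition on $\cA(\lambda)$.
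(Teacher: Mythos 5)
Your argument is correct and is essentially identical to the paper's proof: both use the first-order optimality condition $\nabla\cL(\Bhat_\lambda) = -\lambda \widehat{\mathbf{Z}}$ with $\widehat{\mathbf{Z}} \in \partial\cP(\Bhat_\lambda)$, the fact that $\cP^*(\widehat{\mathbf{Z}}) \leq 1$, the definition of $\cA(\lambda)$, and subadditivity of $\cP^*$ to arrive at the bound $\lambda + \lambda/2$. Your added justification that subgradients of a norm have dual norm at most one, and your (correctly self-identified as unnecessary) observation about the affine gradient, are fine but do not change the route.
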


This proposition motivates the following definition, which we adopt from \cite{chichignoud_practical_2016}.

\begin{definition}\label{def:oracle-lambda}
    Let $\Lambda = \set{\lambda_1, \lambda_2, ..., \lambda_M}$ denote a grid of potential tuning parameters ordered such that $0 < \lambda_1 < \lambda_2 < ... < \lambda_M < \infty$. Fix $\delta \in (0,1)$. The oracle tuning parameter $\lambda^*_{\delta}$ is defined as 

    \[\lambda^*_{\delta} = \argmin_{\lambda \in \Lambda}\set{\prob{\cA(\lambda)} \geq 1 - \delta}\]
\end{definition}

The oracle tuning parameter provides the tightest upper bound in Proposition \ref{prop:dual-convergence}, but is unknowable in practice, since we do not observe $\B^*$ and hence cannot verify $\cA$. The aim of our Lepski-type method is to mimic the performance of $\lambda^*_{\delta}$ in an entirely data-driven fashion. Letting $\Lambda$ denote our ordered grid of potential tuning parameters, we follow \cite{chichignoud_practical_2016} and choose the tuning parameter $\hat{\lambda} \in \Lambda$ that satisfies

\begin{equation}\label{eq:lambdahat}
    \hat{\lambda} = \argmin_{\lambda \in \Lambda}\set{\max_{\lambda', \lambda'' \in \Lambda,  \lambda ' , \lambda '' \geq \lambda} \cP^*(\nabla \cL (\Bhat_{\lambda'}) - \nabla \cL (\Bhat_{\lambda ''})) \leq \bar{C}(\lambda' + \lambda'')}
\end{equation}

where $\bar{C}$ is a constant chosen by the statistician. The following theorem states that $\hat{\lambda}$ recovers the behavior of $\lambda^*_{\delta}$ with high probability, as long as $\bar{C}$ is sufficiently large.

\begin{theorem}\label{thm:lepski-guarantee}
    Let $C$ denote the constant in Proposition \ref{prop:dual-convergence}. If $\hat{\lambda}$ is chosen as in Equation \ref{eq:lambdahat} with $\bar{C} \geq C$, then the following inequalities hold simultaneously with probability at least $1 - \delta$:

    \begin{enumerate}
        \item $\hat{\lambda} \leq \lambda_{\delta}^*$
        \item $\cP^*(\nabla \cL(\Bhat_{\hat{\lambda}}) - \nabla \cL(\B^*)) \leq C^*\lambda_{\delta}^*$
    \end{enumerate}
    where $C^* \geq \bar{C}$.
\end{theorem}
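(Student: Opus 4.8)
The plan is to condition on the single ``good event'' $\cA(\lambda^*_\delta)$, which by Definition \ref{def:oracle-lambda} has probability at least $1 - \delta$, and to show that both claims hold deterministically on it. The crucial structural observation is that $\cA(\lambda)$ is \emph{monotone} in $\lambda$: since $\cA(\lambda) = \set{\cP^*(\nabla\cL(\B^*)) \leq \lambda/2}$ depends on $\lambda$ only through the threshold, $\cA(\lambda) \subseteq \cA(\lambda')$ whenever $\lambda \leq \lambda'$. Consequently, on $\cA(\lambda^*_\delta)$ the event $\cA(\lambda)$ holds simultaneously for \emph{every} grid point $\lambda \geq \lambda^*_\delta$, so Proposition \ref{prop:dual-convergence} may be applied at all such $\lambda$ at once: on this event, $\cP^*(\nabla\cL(\Bhat_\lambda) - \nabla\cL(\B^*)) \leq C\lambda$ for all $\lambda \in \Lambda$ with $\lambda \geq \lambda^*_\delta$.

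Next I would verify that $\lambda^*_\delta$ is feasible for the minimization in Equation \ref{eq:lambdahat}. For any $\lambda', \lambda'' \in \Lambda$ with $\lambda', \lambda'' \geq \lambda^*_\delta$, the triangle inequality for the dual norm $\cP^*$ together with the previous display gives
\[\cP^*(\nabla\cL(\Bhat_{\lambda'}) - \nabla\cL(\Bhat_{\lambda''})) \leq \cP^*(\nabla\cL(\Bhat_{\lambda'}) - \nabla\cL(\B^*)) + \cP^*(\nabla\cL(\B^*) - \nabla\cL(\Bhat_{\lambda''})) \leq C\lambda' + C\lambda'' \leq \bar{C}(\lambda' + \lambda''),\]
where the last step uses $\bar{C} \geq C$. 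Taking the maximum over such $\lambda', \lambda''$ shows the defining constraint of $\hat\lambda$ is met at $\lambda = \lambda^*_\delta$; since $\hat\lambda$ is the smallest grid point satisfying that constraint, $\hat\lambda \leq \lambda^*_\delta$, which is claim (1).

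For claim (2) I would combine claim (1) with the constraint satisfied by $\hat\lambda$. Because $\hat\lambda \leq \lambda^*_\delta$, both $\lambda^*_\delta$ and $\hat\lambda$ are $\geq \hat\lambda$, so instantiating the constraint of Equation \ref{eq:lambdahat} at $(\lambda',\lambda'') = (\lambda^*_\delta, \hat\lambda)$ yields $\cP^*(\nabla\cL(\Bhat_{\lambda^*_\delta}) - \nabla\cL(\Bhat_{\hat\lambda})) \leq \bar{C}(\lambda^*_\delta + \hat\lambda) \leq 2\bar{C}\lambda^*_\delta$. A final triangle inequality then gives
\[\cP^*(\nabla\cL(\Bhat_{\hat\lambda}) - \nabla\cL(\B^*)) \leq \cP^*(\nabla\cL(\Bhat_{\hat\lambda}) - \nabla\cL(\Bhat_{\lambda^*_\delta})) + \cP^*(\nabla\cL(\Bhat_{\lambda^*_\delta}) - \nabla\cL(\B^*)) \leq 2\bar{C}\lambda^*_\delta + C\lambda^*_\delta,\]
the last term by Proposition \ref{prop:dual-convergence} at $\lambda^*_\delta$ (valid on $\cA(\lambda^*_\delta)$). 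Setting $C^* = 2\bar{C} + C \geq \bar{C}$ finishes the argument on the event $\cA(\lambda^*_\delta)$.

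I expect the only genuine subtlety to be the monotonicity/nesting step: recognizing that the single high-probability event $\cA(\lambda^*_\delta)$ already licenses Proposition \ref{prop:dual-convergence} at every larger grid point, so that no union bound over $\Lambda$ is needed and the confidence level $1 - \delta$ does not degrade with the grid cardinality $M$. Two minor points should also be recorded: that $\cP^*$, being a dual norm, satisfies the triangle inequality, so every splitting step above is legitimate; and that the feasible set in Equation \ref{eq:lambdahat} is always nonempty (it contains $\lambda_M$), so $\hat\lambda$ is well-defined even off the good event.
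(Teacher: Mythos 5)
Your proposal is correct and follows essentially the same argument as the paper: condition on $\cA(\lambda^*_\delta)$, use the nesting of the events $\cA(\lambda)$ to apply Proposition \ref{prop:dual-convergence} at all grid points above $\lambda^*_\delta$, deduce $\hat\lambda \leq \lambda^*_\delta$, and finish with the triangle inequality to get the constant $2\bar{C}+C$. The only cosmetic difference is that you establish claim (1) directly by verifying feasibility of $\lambda^*_\delta$ rather than by the paper's contradiction argument, and you state the monotonicity direction of $\cA(\lambda)$ more carefully than the paper does.
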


This theorem is a generalization of Theorem 3 in \cite{chichignoud_practical_2016}, adapted to our setting. The primary advantage of this Lepski-style tuning scheme is that it can be performed using only the gradient of the loss function, which in our setting consists only of summary-level statistics. This is a marked improvement over other summary statistic-based estimators, which typically require an additional set of individual-level data for tuning. 

%In Section \ref{sec:experiments} we evaluate the performance of our approach in comparison to the commonly used targeted validation strategy, which requires individual-level data for at least one of the $Q$ tasks.

The adaptive tuning scheme does hold some disadvantages. First of all, it requires a choice of constant $\bar{C}$, which should be taken to be as close to the constant in Proposition \ref{prop:dual-convergence} as possible. Remark 10 in \cite{chichignoud_practical_2016} offers some guidance as to how to choose $\bar{C}$ but unfortunately their analysis corresponds only to the LASSO. Deriving the exact constant in Proposition \ref{prop:dual-convergence} may be possible under stronger assumptions on the data-generating process (i.e. Gaussianity), and we view this as an area of future work. Furthermore, Theorem \ref{thm:lepski-guarantee} offers only a bound on $\nabla \cL(\Bhat) - \nabla \cL (\B^*)$. Translating this to a bound on $\Bhat - \B^*$ will require strong element-wise conditions on each of the matrices $\Sigmatildeq$, which we do not explore in the present work. Nevertheless, our simulations in Section \ref{sec:experiments} indicate that the adaptive tuning method performs well in terms of the MSE of $\Bhat$, suggesting that adaptive tuning is a good option for model selection when only summary statistics are available.

\section{Numerical experiments and real data application}\label{sec:experiments}

We validate our theory and demonstrate the effectiveness of multi-task learning in proxy data settings via extensive experiments. In each experiment, we take the proxy dataset to be well-specified; in other words, we assume that $\Xi = 0$. When $\Xi$ is nonzero, this predictably leads to worse performance, which we demonstrate in the supplement. The code, further implementation details, and additional simulations which explore the use of our adaptive tuning procedure are also available in the supplement.

First, we consider the effect of varying proxy data size on empirical MSE per task. We generate synthetic Gaussian data with $n_{\min} = 100, p = 100$, $\tilde{n}_{\min} = \tau n_{\min}$ for $\tau \in \set{0.5, 1, 2, 5, 10}$, and $\tilde{\rho}_q = 0$ for each $q$. The number of tasks was fixed at 8. Furthermore, we generate a row-sparse $\B^*$ matrix with 10 nonzero rows and a  $\B^*$ with rank 2 for the sparse and low-rank multi-task estimators, respectively. We then fit the proxy data multi-task learning estimator and compare the prediction MSE per task to the estimator that has access to all of the individual level data and to the estimator that uses the true covariance matrix $\Sigma$. The results of this simulation are given in Figure \ref{fig:ntilde_vs_mse}.

\begin{figure}[ht!]

\centering

    % \begin{subfigure}
    \centering
    \includegraphics[scale =0.3]{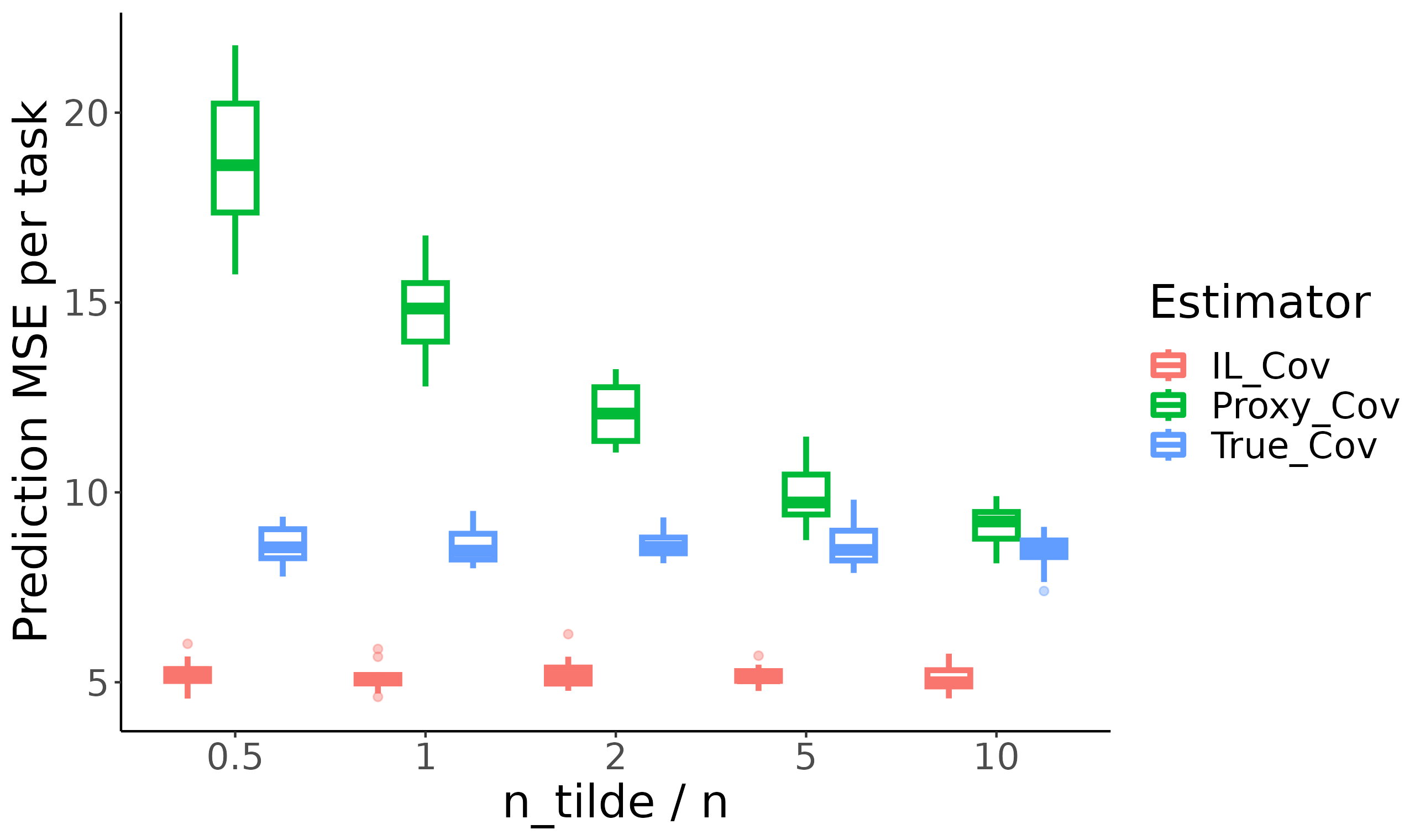}
    \label{fig:ntilde_vs_mse_sparse}
    % \end{subfigure}
    % \begin{subfigure}
    \centering
    \includegraphics[scale =0.3]{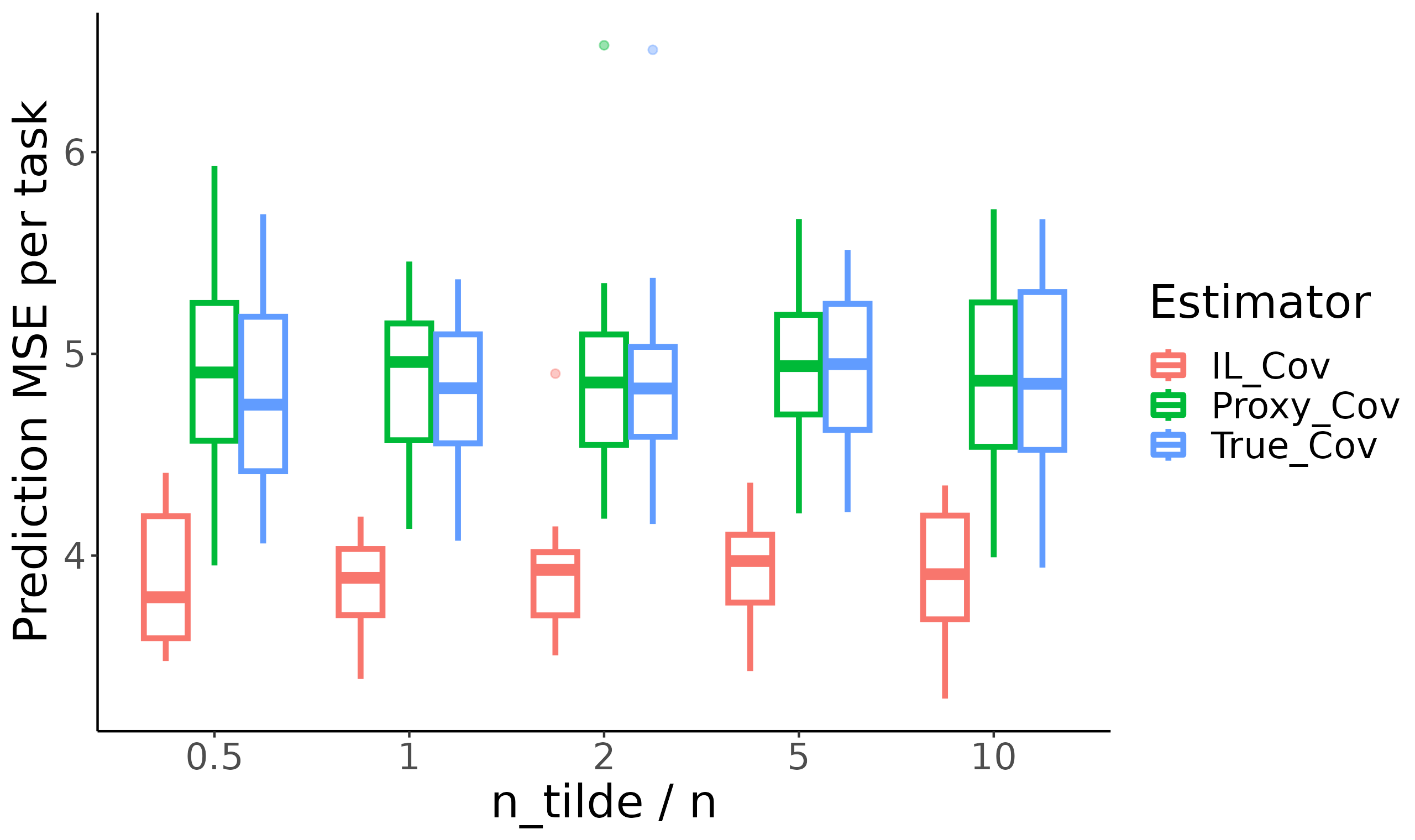}
    \label{fig:ntilde_vs_mse_lowrank}
    % \end{subfigure}

    \caption{Average prediction MSE per task after 100 repetitions plotted against $\tau = \tilde{n}/n$. The top figure corresponds to the sparse estimator, and the bottom is the low-rank estimator. The red boxes indicate the estimator that uses all of the individual-level data (IL\_Cov), the blue boxes indicate the estimator that uses the true covariance matrix of the features (true\_Cov), and the green boxes correspond to the estimator that uses just the proxy data (Proxy\_Cov).}
    \label{fig:ntilde_vs_mse}
    
\end{figure}

We observe a performance gap between the estimators that use the true covariance matrix and the individual level estimators, as predicted by our theory in Section \ref{sec:theory}. The performance of the proxy data estimators increase with increasing proxy sample size, but are unable to match the performance of the individual level estimator, as expected.

Next we study the effect of varying the proportion of overlapping samples between the discovery and proxy datasets. Simiarly, we generate synthetic data with $n = \tilde{n} = 100$, and vary $\tilde{\rho}$, which indicates the proportion of proxy data points that are also in the discovery dataset. With $Q=8$, we generate $\B^*$ in the same way as in the previous simulation. These results are given in Figure \ref{fig:rho_vs_mse}.

Once again, we observe the expected performance gap between the estimators with the true covariance and the individual-level estimators. As the proportion of overlap between the proxy dataset and the discovery dataset grows, we see that the performance of the proxy data estimator converges to that of the individual level estimator. This is anticipated by Theorems \ref{thm:convergence-sparsity} and \ref{thm:convergence-low-rank}.

\begin{figure}[ht!]

\centering

    % \begin{subfigure}
    \centering
    \includegraphics[scale =0.25]{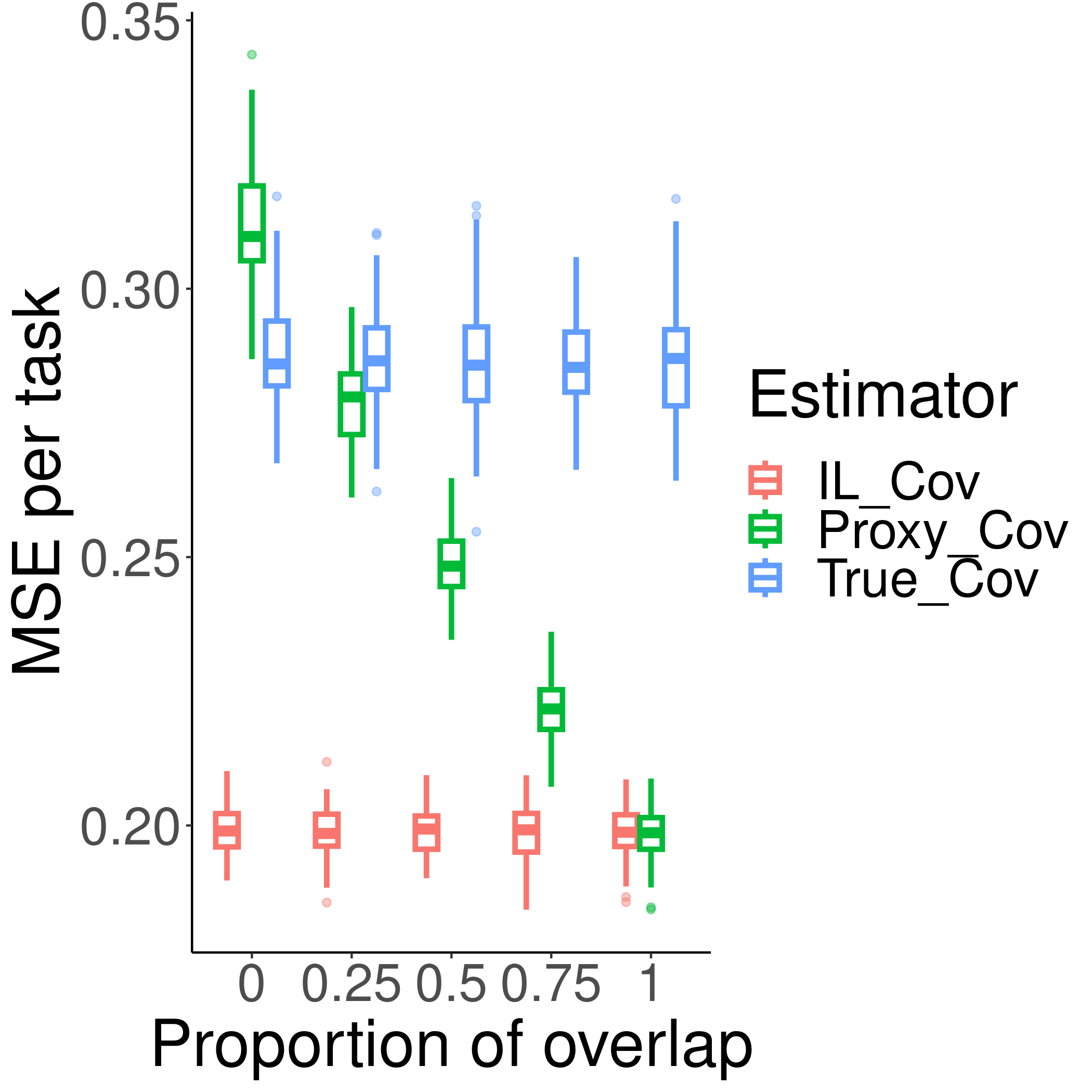}
    \label{fig:rho_vs_mse_sparse}
    % \end{subfigure}
    % \begin{subfigure}
    \centering
    \includegraphics[scale =0.25]{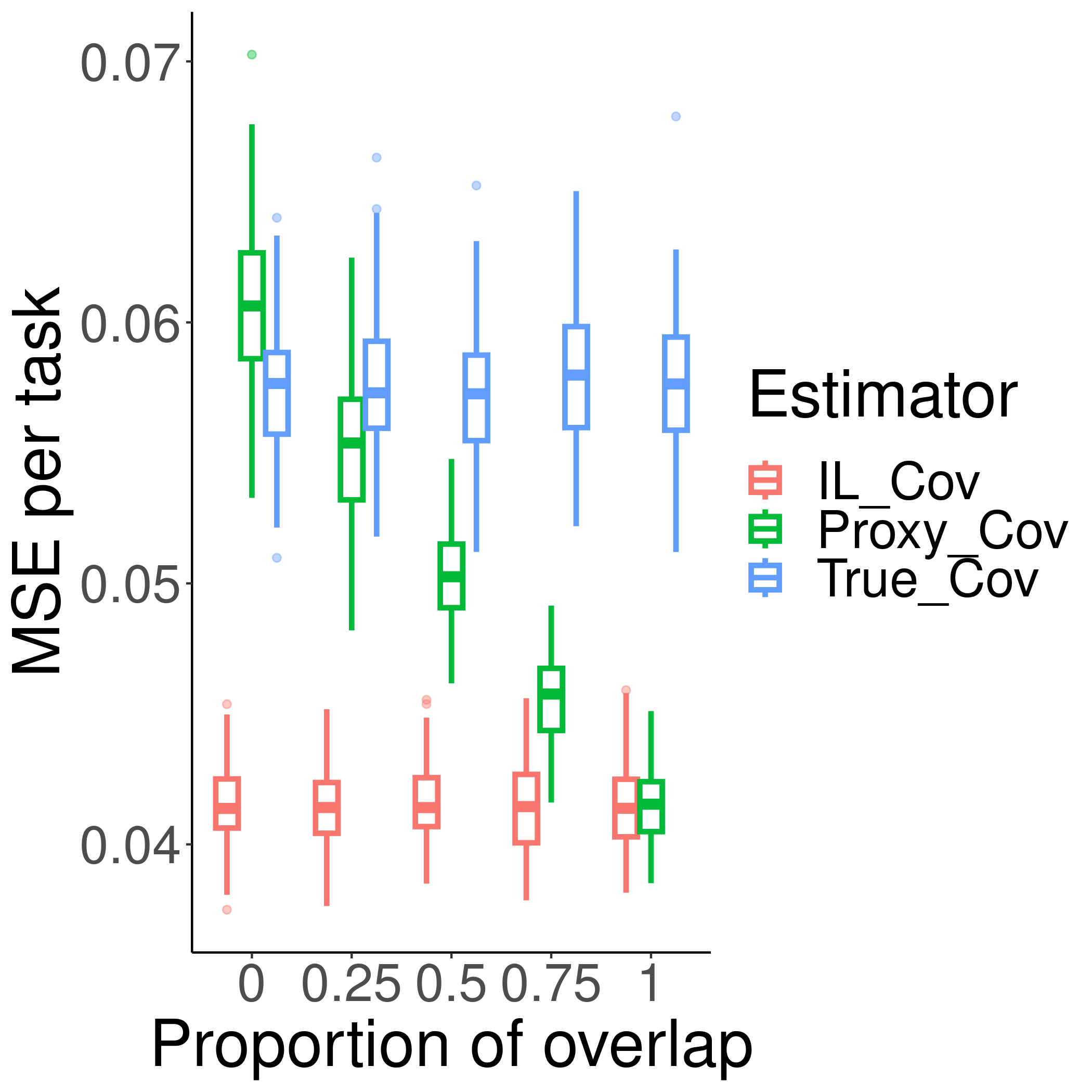}
    \label{fig:rho_vs_mse_lowrank}
    % \end{subfigure}

    \caption{Average MSE per task after 100 repetitions plotted against $\tilde{\rho}$. The left-hand side corresponds to the sparse estimator, and the low rank estimator is on the right. The color-coding is the same as in Figure \ref{fig:ntilde_vs_mse}.}
    \label{fig:rho_vs_mse}
    
\end{figure}

Finally, we have applied our method to analyze real genetic data to demonstrate the real-world applicability of our method. We use a multi-site data obtained from the electronic Medical Records and Genomics (eMERGE) network \cite{the_emerge_team_emerge_2011}, which includes individual-level genotype data from multiple research sites in the United States. Our goal is to predict levels of low-density lipoprotein (LDL) across five adult sites, treating the data from each site as a separate task. We split the data (with sample sizes 
$n_1 = 3813, n_2 = 546, n_3 = 2666, n_4 = 1435, n_5 = 525$) at each task into a training and test set (with a test set data size of 100 for each task) and evaluate the performance of our method using the prediction MSE on the test set. The training data from each site is used to construct the discovery summary statistics 
$\Shatq$ for each task. For approximating 
$\Sigmaq$, we choose two different approaches: one is to use the half of the genotype data from each site (this approach is labeled as Proxy\_MTL1); the other approach is to use 
$\X_1$ (genotype data from site 1) to approximate $\Sigmaq$
 for all the sites. This approach is labeled as Proxy\_MTL2. We use these two approaches to demonstrate a potential trade-off in the construction of the reference panel: Proxy\_MTL1 uses a well-specified reference dataset with a smaller sample size; Proxy\_MTL2 uses a larger reference dataset that may suffer from a distribution shift. For comparison, we also fit a multi-task learning estimator that uses all of the individual level training data for each task, which is labeled ‘Individual\_MTL’, and we fit a ridge regression estimator that models each task separately, for comparison to our multi-task learning approach. The ridge estimator uses the proxy sample covariance instead of the individual level covariance matrix for a fair comparison with our method. We repeat the train-test split process 10 times, which admits the distribution of prediction MSE values that we report in the figure. We use the nuclear norm penalized multi-task learning estimators in this application, because we believe that the genetic effects in this dataset are dense.

\begin{figure}[ht!]

\centering

    \includegraphics[scale =0.5]{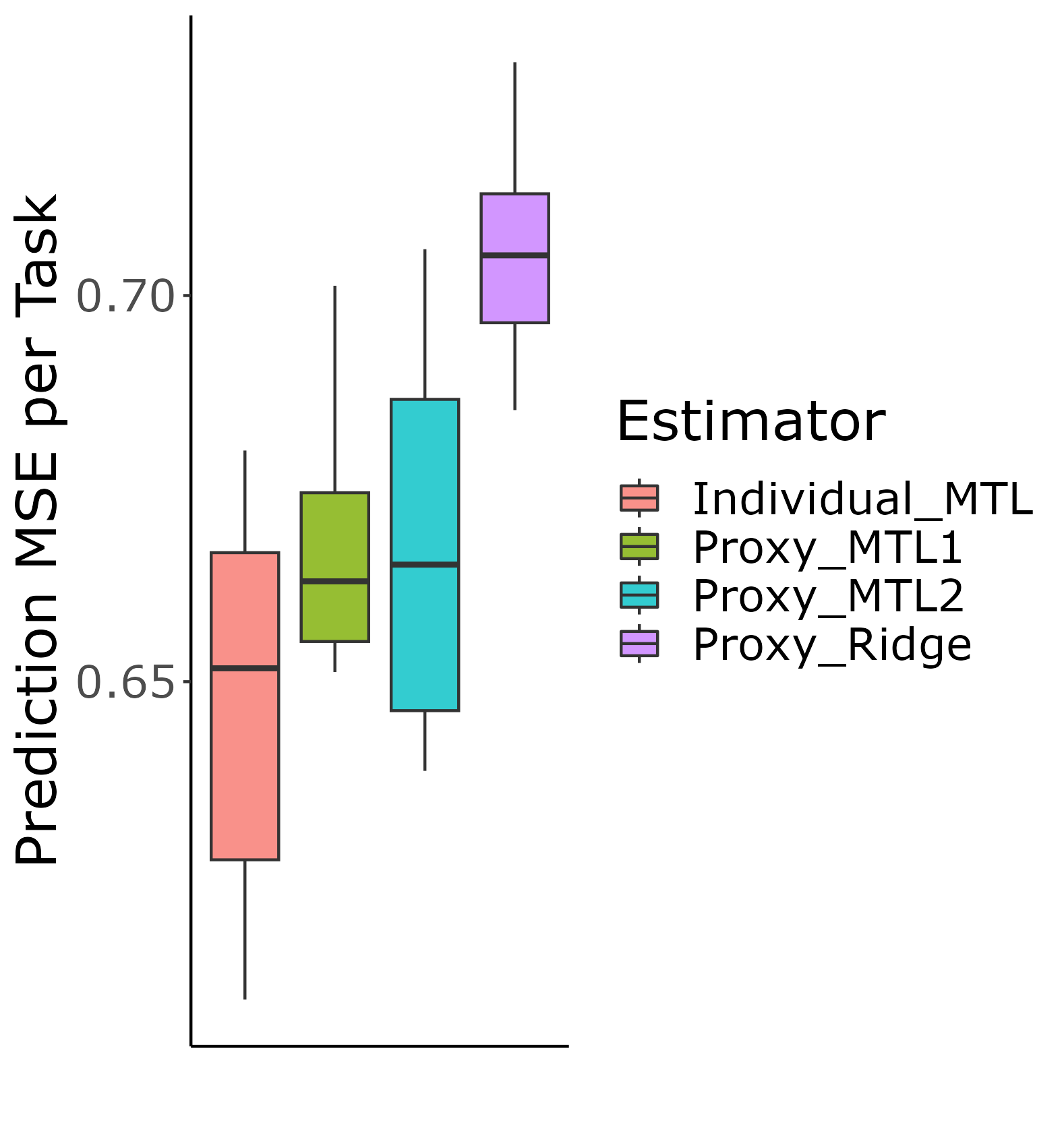}
    \label{fig:emerge_v_ridge}
    
    \caption{Prediction MSE per task after 10 splits of the eMERGE data.}
    
\end{figure}

Our results in Figure  demonstrate that our method is highly practical when only summary-level information is available, as the prediction MSE of our method is nearly the same as the estimator which uses the individual-level data, despite a slight cost in performance. Furthermore, all multi-task learning estimators outperform the ridge-estimator, confirming that multi-task learning is a strong approach when there is shared structure between tasks.

\section{Discussion, Limitations, and Broader Impacts}
% \begin{enumerate}
%     \item short summary of our findings. 
%     \item Extension to non-linear multitask learning models 
%     \item Comment on the practical implications of the sample overlapping analysis. 
% \end{enumerate}

We have described a flexible multi-task framework incorporating summary statistics from distinct sources with a general data-driven tuning scheme for selecting tuning parameters. Our theoretical analysis sheds light on the intrinsic price of using summary-level information from distinct sources for statistical analysis, and suggests that more overlap between the sources, less distributional shift, and larger proxy data sample sizes can alleviate this cost. Our data-driven tuning scheme  allows models to be trained without sample splitting, making it more applicable to real-world settings with only summary statistics available. 

The limitations of our work are summarized as follows. First of all, our methods depend on a linear relationship between the covariates and the outcomes. This assumption is often satisfied in our target application of genetic risk prediction \cite{chatterjee_developing_2016}, but it does limit the applicability of our method to other domains.
To extend our framework to non-linear models, we may use the second-order Taylor approximation of the loss function as in \cite{li_targeting_2021}. However, the summary statistics used by such an algorithm are not found in existing literature or publicly available databases. Additionally, our theoretical results provide only upper bounds on the estimation error of the two estimators that we consider in this work. To fully characterize the cost of using summary statistics for multi-task learning, lower bounds resembling Theorem 2.2 of \cite{li_estimation_2022} are needed. We conjecture that our estimators converge at a minimax optimal rate, and we view the proof of this conjecture as an important future direction. Finally, we may also extend the framework of \cite{duan_adaptive_2022} to our summary statistic based setting to adjust for potential differences between tasks.

Nevertheless, our results have important implications beyond high-dimensional statistical theory. The trade-off between proxy data sample size and discovery-proxy overlap may inform how polygenic risk models are built in real-world applications: Practitioners should prioritise alignment between the sources of summary statistics that they use to build these models, rather than optimizing for large sample sizes. This guidance may lead to more accurate polygenic scores, which have emerged as an important predictive tool in the field of precision medicine.

We recognize that the development of polygenic risk scores, if done without care, may worsen existing health disparities \cite{martin_clinical_2019}. This is a potential negative societal impact of our work. We hope that our multi-task learning framework may be used to incorporate data from diverse populations to improve generalizability and transportability of genetic risk predictions to overcome these negative impacts.

\subsection*{Acknowledgements}
Rui Duan is supported by National Institute of General Medical Sciences (NIGM) R01GM148494. Parker Knight is supported by an NSF Graduate Research Fellowship. We would like to thank Rajarshi Mukherjee for an insightful discussion of Lepski's method, and thank the reviewers for their comments and feedback which greatly improved the paper.

\newpage

\appendix

\begin{center}

\end{center}

\section{Proofs}

\subsection{Preliminaries}\label{app:prelims}

In this section, we state a handful of standard definitions and concentration results before proving the main theorems.

\begin{definition}\label{def:subG}
   A random variable $Z$ is sub-Gaussian with parameter $\nu^2$ if for any $\lambda > 0$, the following inequality holds:

   $$\ev{e^{\lambda(Z - \ev{Z})}} \leq e^{\nu^2\lambda^2 / 2}$$

   We write $Z \in \subG(\nu^2)$
\end{definition}

\begin{definition}\label{def:subG-vec}
    A random vector $Z \in \R^d$ is sub-Gaussian with parameter $\nu^2$ if for any constant $a \in \R^d$, we have $\ip{Z, a} \in \subG(\norm{a}_2^2\nu^2)$. When this holds, we write $Z \in \subG_d(\nu^2)$.
\end{definition}

\begin{definition}\label{def:subG}
   A random variable $W$ is sub-exponential with parameters $\nu^2$ and $\alpha$ if the following inequality holds for all $|\lambda| < \frac{1}{\alpha}$

   $$\ev{e^{\lambda(W - \ev{W})}} \leq e^{\nu^2\lambda^2 / 2}$$

   We write $W \in \subE(\nu^2, \alpha)$
\end{definition}

\begin{definition}\label{def:subG-vec}
    A random vector $W \in \R^d$ is sub-exponential with parameters $\nu^2$ and $\alpha$ if for any constant $a \in \R^d$, we have $\ip{W, a} \in \subE(\norm{a}_2^2\nu^2, \norm{a}_{\infty}\alpha)$. When this holds, we write $Z \in \subE_d(\nu^2, \alpha)$.
\end{definition}

\begin{lemma}[Bernstein's inequality]\label{lemma:bernstein}

    Let $Z \in \subE(\nu^2, \alpha)$. Then

    $$\prob{|Z - \ev{Z}| \geq t} \leq 2\exp\left\{- \min\left(\frac{t^2}{2\nu^2}, \frac{t}{2\alpha}\right)\right\}$$
    
\end{lemma}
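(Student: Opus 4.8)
The plan is to apply the standard Chernoff (exponential-moment) method, optimizing the tilt parameter over the admissible range $|\lambda| < 1/\alpha$ dictated by the sub-exponential definition, and then to handle the two tails separately via a union bound. First I would bound the upper tail. For any $\lambda \in (0, 1/\alpha)$, Markov's inequality applied to the nonnegative random variable $e^{\lambda(Z - \ev{Z})}$ together with Definition \ref{def:subG} (sub-exponential) gives
\[
\prob{Z - \ev{Z} \geq t} \leq e^{-\lambda t}\,\ev{e^{\lambda(Z - \ev{Z})}} \leq \exp\left(-\lambda t + \frac{\nu^2 \lambda^2}{2}\right).
\]
It then remains to minimize the exponent $g(\lambda) = -\lambda t + \nu^2\lambda^2/2$ over the constrained interval $(0, 1/\alpha)$.

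The optimization splits into two cases depending on where the unconstrained minimizer $\lambda^* = t/\nu^2$ falls relative to the constraint. If $t \leq \nu^2/\alpha$, then $\lambda^* \leq 1/\alpha$ is admissible, and choosing $\lambda = t/\nu^2$ yields exponent $-t^2/(2\nu^2)$. If instead $t > \nu^2/\alpha$, then $g$ is strictly decreasing on $(0, 1/\alpha)$, so I would push $\lambda \uparrow 1/\alpha$; substituting $\lambda = 1/\alpha$ gives exponent $-t/\alpha + \nu^2/(2\alpha^2)$, and since $t > \nu^2/\alpha$ implies $\nu^2/(2\alpha^2) \leq t/(2\alpha)$, this is at most $-t/(2\alpha)$. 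In both regimes the exponent is bounded by $-\min\left(t^2/(2\nu^2),\, t/(2\alpha)\right)$, and one checks these regimes are consistent, since $t^2/(2\nu^2) \leq t/(2\alpha)$ precisely when $t \leq \nu^2/\alpha$.

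Finally I would obtain the lower tail $\prob{Z - \ev{Z} \leq -t}$ by applying the identical argument to $-(Z - \ev{Z})$, which satisfies the same moment generating function bound for $\lambda \in (-1/\alpha, 0)$ by the symmetry of the condition $|\lambda| < 1/\alpha$ in Definition \ref{def:subG}. A union bound over the two one-sided events produces the factor of $2$ and the stated two-sided inequality. The main obstacle, though it is modest, is the constrained optimization: matching the $\min$ in the statement requires correctly handling the boundary regime where the optimal tilt is pinned at $1/\alpha$ rather than at $t/\nu^2$, and verifying $\nu^2/(2\alpha^2) \leq t/(2\alpha)$ there. This boundary case is the only point at which the scale parameter $\alpha$, as opposed to $\nu^2$ alone, enters the bound, and it is what produces the sub-exponential (linear-in-$t$) tail for large deviations.
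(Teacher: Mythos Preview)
Your proof is correct and is the standard Chernoff-bound argument for Bernstein's inequality. The paper itself does not supply a proof of this lemma; it is stated as a classical result (with the subsequent Lemmas~\ref{lemma:subE-norm-bound} and~\ref{lemma:subE-op-norm-bound} deferred to \cite{vershynin_high-dimensional_2018}), so there is no paper-proof to compare against, but your approach is exactly the textbook derivation one would expect. One minor technicality: the sub-exponential condition is stated for the open interval $|\lambda| < 1/\alpha$, so when you ``push $\lambda \uparrow 1/\alpha$'' you are implicitly taking a limit; this is harmless since the tail bound $\prob{Z - \ev{Z} \geq t} \leq \exp(g(\lambda))$ holds for every $\lambda$ in the open interval and the right-hand side is continuous in $\lambda$, so the inequality persists at the infimum.
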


\begin{lemma}\label{lemma:subE-norm-bound}
    Let $Z \in \subE_d(\nu^2, \alpha)$ with $\ev{Z} = 0$. Then there exist constants $C_1$ and $C_2$ such that 

    $$\prob{\norm{Z}_2 \geq t}\leq C_1\exp\left[C_2(d - \min(t^2/\nu^2, t/\alpha))\right]$$
\end{lemma}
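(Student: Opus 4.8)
The plan is to run the standard $\epsilon$-net argument over the unit sphere of $\R^d$, which reduces the bound on $\norm{Z}_2$ to a union bound over one-dimensional sub-exponential tails, each controlled by Bernstein's inequality (Lemma \ref{lemma:bernstein}). First I would fix $\epsilon = 1/2$ and take $\cN \subset \R^d$ to be a minimal $\epsilon$-net of the unit sphere $\set{v : \norm{v}_2 = 1}$, so that $|\cN| \leq (1 + 2/\epsilon)^d = 5^d$ by the usual volumetric bound. A routine net-approximation step shows that for every $x \in \R^d$,
\[\norm{x}_2 = \sup_{\norm{v}_2 = 1}\ip{x, v} \leq \frac{1}{1-\epsilon}\max_{v \in \cN}\ip{x, v} = 2\max_{v \in \cN}\ip{x, v},\]
since the maximizing unit vector $v_0$ can be written as $v + (v_0 - v)$ with $v \in \cN$ and $\norm{v_0 - v}_2 \leq \epsilon$, giving $\ip{x, v_0} \leq \max_{v\in\cN}\ip{x,v} + \epsilon\norm{x}_2$. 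Hence $\set{\norm{Z}_2 \geq t} \subseteq \bigcup_{v \in \cN}\set{\ip{Z,v} \geq t/2}$, and a union bound yields $\prob{\norm{Z}_2 \geq t} \leq \sum_{v \in \cN}\prob{\ip{Z,v} \geq t/2}$.

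Next I would control each term of the sum. For a fixed $v \in \cN$, the definition of a sub-exponential random vector (Definition \ref{def:subG-vec}) gives $\ip{Z,v} \in \subE(\norm{v}_2^2\nu^2, \norm{v}_\infty\alpha)$, and since $\norm{v}_2 = 1$ and $\norm{v}_\infty \leq \norm{v}_2 = 1$ this simplifies to $\ip{Z,v} \in \subE(\nu^2, \alpha)$ uniformly over $v \in \cN$. Because $\ev{Z} = 0$ we have $\ev{\ip{Z,v}} = 0$, so Bernstein's inequality applied with deviation level $t/2$ gives
\[\prob{\ip{Z,v} \geq t/2} \leq 2\exp\left\{-\min\left(\frac{t^2}{8\nu^2}, \frac{t}{4\alpha}\right)\right\} \leq 2\exp\left\{-\frac18\min\left(\frac{t^2}{\nu^2}, \frac{t}{\alpha}\right)\right\}.\]
Combining with the union bound and $|\cN| \leq 5^d$ then yields
\[\prob{\norm{Z}_2 \geq t} \leq 2\cdot 5^d\exp\left\{-\frac18\min\left(\frac{t^2}{\nu^2}, \frac{t}{\alpha}\right)\right\} = 2\exp\left\{d\log 5 - \frac18\min\left(\frac{t^2}{\nu^2}, \frac{t}{\alpha}\right)\right\},\]
and the stated form follows by absorbing the numerical factors into constants $C_1$ and $C_2$ (the bound being informative when $\min(t^2/\nu^2, t/\alpha)$ is large relative to $d$, and otherwise vacuous since the right-hand side already exceeds $1$).

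I do not expect a genuine obstacle: this is the textbook net-plus-Bernstein template. The only points requiring care are the uniform control of the per-direction sub-exponential scale through $\norm{v}_\infty \leq \norm{v}_2$ in Definition \ref{def:subG-vec} (so the $\alpha$ parameter does not degrade with $d$), and the bookkeeping that folds the covering-number factor $5^d$ and the two-regime Bernstein tail into the single pair of constants claimed in the lemma. One could equivalently use a finer net (smaller $\epsilon$) at the cost of a larger but still exponential-in-$d$ covering number, without changing the conclusion.
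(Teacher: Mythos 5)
Your argument is exactly the covering-plus-Bernstein proof that the paper itself invokes for this lemma (the paper does not write it out, deferring to Vershynin, Chapter 4), so it matches the intended proof in both approach and detail. The only cosmetic point is that your final bound $2\exp\left[d\log 5 - \tfrac18\min(t^2/\nu^2,\,t/\alpha)\right]$ carries two distinct numerical constants in the exponent, which is the standard (and correct) form; the lemma's single-constant phrasing should be read that way, as it is in the downstream application where only the regime $\min(t^2/\nu^2, t/\alpha) \gtrsim d$ matters.
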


\begin{lemma}\label{lemma:subE-op-norm-bound}
Suppose $\mathbf{A} \in \R^{m \times d}$ is a random matrix whose rows are independent $\subE_d(\nu^2, \alpha)$ random variables with zero mean. Then there exist constants $C_1$ and $C_2$ such that the operator norm of $\mathbf{A}$ satisfies

$$\prob{\norm{\mathbf{A}}_{\op} \geq t} \leq C_1\exp\left[C_2(m + d - \min(t^2/\nu^2,t/\alpha))\right]$$

\end{lemma}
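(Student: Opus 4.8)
\noindent\emph{Proposed proof.} The plan is to reduce the claim to the vector bound of Lemma~\ref{lemma:subE-norm-bound} via a covering argument over the unit sphere $S^{d-1}\subset\R^d$. Write $A_1,\dots,A_m\in\R^d$ for the rows of $\mathbf{A}$, which are independent, mean-zero, and $\subE_d(\nu^2,\alpha)$. The first step is to show that for every fixed unit vector $v\in S^{d-1}$ the vector $\mathbf{A}v\in\R^m$ is itself $\subE_m(\nu^2,\alpha)$ with mean zero. To see this, fix $u\in\R^m$ and expand $\ip{\mathbf{A}v,u}=\sum_{i=1}^m u_i\ip{A_i,v}$. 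By Definition~\ref{def:subG-vec}, $\ip{A_i,v}\in\subE(\norm{v}_2^2\nu^2,\norm{v}_\infty\alpha)$, hence $u_i\ip{A_i,v}\in\subE(u_i^2\norm{v}_2^2\nu^2,\,|u_i|\,\norm{v}_\infty\alpha)$; since the rows are independent, the variance proxies add and the scale parameters take a maximum, giving $\ip{\mathbf{A}v,u}\in\subE\!\big(\norm{u}_2^2\norm{v}_2^2\nu^2,\ \norm{u}_\infty\norm{v}_\infty\alpha\big)$. Using $\norm{v}_2=1$ and $\norm{v}_\infty\le1$, this is contained in $\subE(\norm{u}_2^2\nu^2,\norm{u}_\infty\alpha)$, and $\ev{\ip{\mathbf{A}v,u}}=0$ because each $\ev{A_i}=0$. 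Thus $\mathbf{A}v\in\subE_m(\nu^2,\alpha)$ with zero mean.

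Next I would apply Lemma~\ref{lemma:subE-norm-bound} to $\mathbf{A}v$ to obtain, for each fixed $v\in S^{d-1}$,
\[\prob{\norm{\mathbf{A}v}_2\ge s}\le C_1\exp\!\big[C_2\big(m-\min(s^2/\nu^2,\,s/\alpha)\big)\big].\]
Let $\cN$ be a $1/2$-net of $S^{d-1}$ with $|\cN|\le 5^{d}$. The standard discretization bound for operator norms gives $\norm{\mathbf{A}}_{\op}=\sup_{v\in S^{d-1}}\norm{\mathbf{A}v}_2\le 2\max_{v\in\cN}\norm{\mathbf{A}v}_2$, so that $\{\norm{\mathbf{A}}_{\op}\ge t\}\subseteq\{\max_{v\in\cN}\norm{\mathbf{A}v}_2\ge t/2\}$. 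A union bound over $\cN$ then yields
\[\prob{\norm{\mathbf{A}}_{\op}\ge t}\le 5^{d}\,C_1\exp\!\big[C_2\big(m-\min(t^2/4\nu^2,\,t/2\alpha)\big)\big].\]
Finally, absorbing $5^{d}=e^{d\ln 5}$ into the exponent and using $\min(t^2/4\nu^2,t/2\alpha)\ge\frac14\min(t^2/\nu^2,t/\alpha)$, I would relabel the constants (as in Lemma~\ref{lemma:subE-norm-bound}, the $(m+d)$ term and the $\min$-term may carry different absolute constants) to arrive at the stated inequality.

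I do not anticipate a deep obstacle here; the only point requiring care is the first step, where one must track how the sub-exponential parameters transform under scaling and summation of independent coordinates and confirm that they remain $(\nu^2,\alpha)$ with no dependence on $m$ or $d$ — this is precisely what isolates all of the dimension dependence into the net cardinality $5^d$ (which then becomes the extra ``$+m$'' relative to Lemma~\ref{lemma:subE-norm-bound}). As an alternative route, one can bypass Lemma~\ref{lemma:subE-norm-bound} entirely and net both spheres $S^{m-1}$ and $S^{d-1}$ simultaneously, bound each bilinear form $\ip{\mathbf{A}v,u}=\sum_i u_i\ip{A_i,v}\in\subE(\nu^2,\alpha)$ directly with Bernstein's inequality (Lemma~\ref{lemma:bernstein}), and union bound over the product net of size at most $5^{m+d}$; this gives the same conclusion with essentially the same bookkeeping.
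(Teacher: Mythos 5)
Your argument is correct and is essentially the paper's own: the paper gives no written proof of this lemma, deferring to ``covering arguments and Bernstein's inequality'' as in Vershynin, Chapter 4, which is exactly the net-plus-union-bound reduction you carry out (your closing remark about netting both spheres is the textbook version verbatim). The only caveat, which you already flag, is that the single constant $C_2$ in the stated bound cannot literally multiply both the $(m+d)$ term and the $\min$ term after the net cardinality and the factor-of-$4$ loss are absorbed; this looseness is inherited from the paper's own statement of Lemma~\ref{lemma:subE-norm-bound} and is not a defect of your argument.
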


The proofs of Lemmas \ref{lemma:subE-norm-bound} and \ref{lemma:subE-op-norm-bound} follow from covering arguments and Bernstein's inequality; refer to \cite{vershynin_high-dimensional_2018} Chapter 4 for details.

\subsection{Proofs of Results in Section 3}

The proofs of the upper bounds follow the general structure outlined in \cite{negahban_unified_2012}.  Throughout the proofs, let $\cL_q(\B) = \frac12\norm{(\Sigmatildeq)^{1/2}\B e_q}_2^2 - \ip{\Shatq, \B e_q}$ and $\cL(\B) = \sum_{q \in [Q]}\cL_q(\B)$. The matrix $\nabla \cL(\B^*)$ will play a central role in our analysis; the following lemma gives a characterization of its distribution.

\begin{lemma}\label{lemma:LB-dist}
    There exist constants $C > 0$ and $c > 0$, depending only on $\sigma^2$ and the eigenvalues of the matrices $\Sigmaq_1$ and $\Sigmaq_2$, such that the $(i,q)_{th}$ entry of $\nabla\cL(\B^*)$ has distribution $\subE(\nu_q^2, \alpha_q)$, where 

    \begin{align*}
        \nu_q^2 &= \frac{C}{n_q}\left(1 + \norm{\beta^{(q)}}_2^2(\frac{n_1}{\tilde{n}_q} + 1 - 2\tilde{\rho}_q)\right) \\
        \alpha_q &= \left(\frac{c}{n_q \wedge \tilde{n}_q}\right)\norm{\beta^{(q)}}_{\infty}
    \end{align*}

\end{lemma}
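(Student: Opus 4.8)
The strategy is to write $\nabla\cL(\B^*)$ in closed form and then recognize each of its entries as a weighted sum of independent products of sub-Gaussian variables, whose sub-exponential parameters follow from standard Orlicz-norm inequalities. Since $\cL_q(\B) = \tfrac12 (\B e_q)^{\top}\Sigmatildeq(\B e_q) - \ip{\Shatq, \B e_q}$ depends on $\B$ only through its $q$-th column, the $q$-th column of $\nabla\cL(\B)$ is $\Sigmatildeq\B e_q - \Shatq$; evaluating at $\B^*$ and substituting $\Shatq = \tfrac1{n_q}(\Xq)^{\top}\Yq$ and $\Yq = \Xq\betaq + \epsq$ gives
\[
\nabla\cL(\B^*)e_q \;=\; \Bigl(\tfrac{1}{\tilde{n}_q}(\Xtildeq)^{\top}\Xtildeq - \tfrac{1}{n_q}(\Xq)^{\top}\Xq\Bigr)\betaq \;-\; \tfrac{1}{n_q}(\Xq)^{\top}\epsq .
\]
Writing $X_l\in\R^p$ for the sample indexed by $l$ and $(X_l)_i$ for its $i$-th coordinate, the $(i,q)$ entry is $T_1 + T_2$, where $T_2 = -\tfrac1{n_q}\sum_{l\in\cI_q}(X_l)_i\,\varepsilon^{(q)}_l$ is a ``noise'' term and $T_1 = \tfrac1{\tilde{n}_q}\sum_{l\in\widetilde{\cI}_q}(X_l)_i\ip{X_l,\betaq} - \tfrac1{n_q}\sum_{l\in\cI_q}(X_l)_i\ip{X_l,\betaq}$ is a ``covariance mismatch'' term.

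Next I would invoke the standard sub-exponential calculus. Under Assumption~\ref{as:distribution}, each coordinate $(X_l)_i\in\subG(c)$ and each $\ip{X_l,\betaq}\in\subG(c\norm{\betaq}_2^2)$ for a constant $c$ bounded in terms of the eigenvalues of $\Sigmaq_1,\Sigmaq_2$, while $\varepsilon^{(q)}_l\in\subG(\sigma^2)$; a product of a $\subG(a)$ and a $\subG(b)$ variable lies in $\subE(Cab, C\sqrt{ab})$ (its $\psi_1$-norm is at most the product of the two $\psi_2$-norms), a sum of $m$ independent $\subE(\nu^2,\alpha)$ variables lies in $\subE(m\nu^2,\alpha)$, and multiplication by $t$ sends $\subE(\nu^2,\alpha)$ to $\subE(t^2\nu^2,|t|\alpha)$. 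Applied to the noise term, $T_2$ is $\tfrac1{n_q}$ times a sum of $n_q$ independent $\subE(C\sigma^2,C\sigma)$ summands, so $T_2\in\subE(C\sigma^2/n_q,\,C\sigma/n_q)$.

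For $T_1$, set $Z_l := (X_l)_i\ip{X_l,\betaq}$, so that after centering $Z_l - \ev{Z_l}\in\subE(C\norm{\betaq}_2^2, C\norm{\betaq}_2)$; the means $\ev{Z_l}$ are coordinates of $\Sigmaq_1\betaq$ or $\Sigmaq_2\betaq$ and assemble into an entry of $\Xi$, which does not affect the sub-exponential parameters. The key step is to partition the samples entering $T_1$ into three \emph{independent} blocks, being careful about the overlap: the $|\cI_q\cap\widetilde{\cI}_q| = \tilde{\rho}_q\tilde{n}_q$ shared individuals each contribute a \emph{single} term $(\tfrac1{\tilde{n}_q}-\tfrac1{n_q})Z_l$ — the same covariate vector appears in both designs — the $(1-\tilde{\rho}_q)\tilde{n}_q$ proxy-only individuals each contribute $\tfrac1{\tilde{n}_q}Z_l$, and the $n_q - \tilde{\rho}_q\tilde{n}_q$ discovery-only individuals each contribute $-\tfrac1{n_q}Z_l$. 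Adding the variance contributions of the three blocks gives a $\subE$ parameter
\[
\nu^2 \;=\; C\norm{\betaq}_2^2\Bigl[\bigl(\tfrac{1}{\tilde{n}_q}-\tfrac{1}{n_q}\bigr)^{2}\tilde{\rho}_q\tilde{n}_q + \tfrac{1-\tilde{\rho}_q}{\tilde{n}_q} + \tfrac{n_q - \tilde{\rho}_q\tilde{n}_q}{n_q^{2}}\Bigr] ,
\]
and clearing denominators shows the bracket collapses to $\tfrac{1}{n_q}\bigl(\tfrac{n_q}{\tilde{n}_q}+1-2\tilde{\rho}_q\bigr)$, with the $\alpha$-parameter of $T_1$ bounded in the same way to order $(n_q\wedge\tilde{n}_q)^{-1}$ up to a norm of $\betaq$. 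Finally, $T_1$ and $T_2$ are dependent (both use $\Xq$), so instead of adding parameters I would bound the $\psi_1$-norm of $T_1+T_2$ by the sum of the two $\psi_1$-norms (equivalently, split the joint MGF by Cauchy--Schwarz), obtaining $(\nabla\cL(\B^*))_{iq}\in\subE(\nu_q^2,\alpha_q)$ with $\nu_q^2 \asymp \tfrac{C}{n_q}\bigl(1+\norm{\betaq}_2^2(\tfrac{n_q}{\tilde{n}_q}+1-2\tilde{\rho}_q)\bigr) = C\gamma_q/n_q$ and $\alpha_q$ of the stated order.

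\textbf{Main obstacle.} The crux is the treatment of overlapping samples: one must see that a shared individual enters $T_1$ through the \emph{difference} $\tfrac1{\tilde{n}_q}-\tfrac1{n_q}$ of the two coefficients rather than as two independent contributions — this is precisely what produces the $-2\tilde{\rho}_q$ factor in $\gamma_q$ — and then perform the algebraic simplification of the three-block variance. Secondary points requiring care are the dependence between the noise and mismatch terms (handled by the triangle inequality for the $\psi_1$-norm) and bookkeeping of which population covariance governs each sample, which affects the mean (the $\Xi$ term) but not the sub-exponential parameters claimed here.
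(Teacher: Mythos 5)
Your proposal is correct and takes essentially the same route as the paper's proof: the same closed form for $\nabla\cL(\B^*)e_q$ split into a noise term and a covariance-mismatch term, the same three-block partition of samples by overlap (with shared individuals entering through the coefficient difference, which is exactly where the $-2\tilde{\rho}_q$ comes from), and the same sub-exponential calculus yielding the stated parameters. Your explicit handling of the dependence between the noise and mismatch terms via the $\psi_1$ triangle inequality is a small point of extra care that the paper passes over silently.
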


\begin{proof}
    Direct calculation reveals that the $q_{th}$ column of $\nabla \cL(\B^*)$ is equal to $\Shatq - \Sigmatildeq \betaq$. Observe 

    \begin{align*}
    \Shatq - \Sigmatildeq \betaq &= \frac{1}{n_q}(\Xq)^{\top}\epsq + \frac{1}{n_q}(\Xq)^{\top}\Xq\betaq - \frac{1}{\tilde{n}_q}(\Xtildeq)^{\top}\Xtildeq\betaq \\
    &= \frac{1}{n_q}(\Xq)^{\top}\epsq + \left(\frac{1}{n_q}(\Xq)^{\top}\Xq - \frac{1}{\tilde{n}_q}(\Xtildeq)^{\top}\Xtildeq\right)\betaq
    \end{align*}

    We clearly have that $\frac{1}{n_q}(\Xq)^{\top}\epsq \in \subE_p(C/n_q, c/n_q)$ for some $C, c > 0$ that depend on $\sigma^2$ and $\Sigmaq$. To analyze the second term, notice

    \begin{align*}
        \frac{1}{n_q}(\Xq)^{\top}\Xq - \frac{1}{\tilde{n}_q}(\Xtildeq)^{\top}\Xtildeq &= \frac{1}{|\cI_q|}\sum_{i \in \cI_q}X_iX_i^{\top} - \frac{1}{|\widetilde{\cI}_q|}\sum_{i \in \widetilde{\cI}_q}X_iX_i^{\top} \\
        &= \sum_{i \in \cI_q \cap \widetilde{\cI}_q}\left(\frac{1}{|\cI_q|} - \frac{1}{|\widetilde{\cI}_q|}\right)X_iX_i^{\top}\\
        &+\sum_{i \in \cI_q - \widetilde{\cI}_q}\frac{1}{|\cI_q|}X_iX_i^{\top} \\
        &-\sum_{i \in \widetilde{\cI}_q - \cI_q}\frac{1}{|\widetilde{\cI}_q|}X_iX_i^{\top} \\
        &= \text{I} + \text{II} - \text{III}
    \end{align*}

Right-multiplying by $\betaq$ and applying standard properties of sub-exponential random variables, we get

\begin{align*}
    &\text{I}\betaq \in \subE_p\left(C_1|\cI_q \cap \widetilde{\cI}_q|\left[\frac{1}{|\cI_q|} - \frac{1}{|\widetilde{\cI}_q|}\right]^2\norm{\betaq}_2^2, c_1\left[\frac{1}{|\cI_q|} - \frac{1}{|\widetilde{\cI}_q|}\right]\norm{\betaq}_{\infty}\right) \\
    &\text{II}\betaq \in \subE_p\left(C_2\frac{|\cI_q - \widetilde{\cI}_q|}{|\cI_q|^2}\norm{\betaq}_2^2,\frac{c_2}{|\cI_q|}\norm{\betaq}_{\infty}\right)\\
    &\text{III}\betaq \in \subE_p\left(C_3\frac{|\widetilde{\cI}_q - \cI_q|}{|\widetilde{\cI}_q|^2}\norm{\betaq}_2^2,\frac{c_3}{|\widetilde{\cI}_q|}\norm{\betaq}_{\infty}\right)
\end{align*}

To find the distribution of (I + II - III)$\betaq$, we take the sum of the first parameters and the max of the second parameters, and choose constants $C_4 = \max(C_1, C_2, C_3)$ and $c_4 = \max(c_1, c_2, c_3)$. After some arithmetic this yields

$$(\text{I + II - III})\betaq \in \subE_p\left(C_4\left(\frac{1}{\tilde{n}_q} - 2\frac{|\cI_q \cap \widetilde{\cI}_q|}{n_q\tilde{n}_q} + \frac{1}{n_q}\right)\norm{\betaq}_2^2, \frac{c_4}{n_q \wedge \tilde{n}_q}\norm{\betaq}_{\infty}\right)$$

where we use $|\cI_q| = n_q$ and $|\widetilde{\cI}_q| = \tilde{n}_q$ for clarity, as well as $1/x - 1/y \leq \max(1/x,1/y)$ for $x,y > 0$ to simplify the second parameter. Combining this with $\frac{1}{n_q}(\Xq)^{\top}\epsq \in \subE_p(C/n_q, c/n_q)$, we have that 

$$\Shatq - \Sigmatildeq\betaq \in \subE_p\left(C\left[ \frac{1}{n_q} + \left(\frac{1}{\tilde{n}_q} - 2\frac{|\cI_q \cap \widetilde{\cI}_q|}{n_q\tilde{n}_q} + \frac{1}{n_q}\right)\norm{\betaq}_2^2\right], \frac{c}{n_q \wedge \tilde{n}_q}\norm{\betaq}_{\infty}\right)$$

for $C,c$ sufficiently large. This completes the proof.

\end{proof}

\subsubsection{Proof of Theorem 3.1 %\ref{thm:convergence-sparsity}
}

\begin{proof}

Let $\Deltahat = \Bhatsp - \B^*$, and define the events 

$$\cA_1 = \set{\frac{\lambda}{2} \geq \norm{\nabla \cL(\B^*)}_{2,\infty}}$$

$$\cA_2 = \set{\sum_{q = 1}^Q\norm{(\Sigmatildeq)^{1/2}\Delta e_t}_2^2 \geq \frac{1}{\kappa}\norm{\Delta}_F^2 \quad \forall \Delta \in \cC_3(S^*)}$$

The following analysis is conditional on $\cA_1 \cap \cA_2$.  

\bigskip

We first need that $\Deltahat \in \cC_3(S^*)$. This is a standard result for high-dimensional M-estimators and we omit the proof for brevity. See Proposition 9.13 in \cite{wainwright_high-dimensional_2019}.

By the optimality of $\Bhatsp$, we know that 

$$\sum_{q = 1}^Q\norm{(\Sigmatildeq)^{1/2}\Deltahat}_2^2 \lesssim
\underbrace{\sum_{q = Q}^{\top}\langle \Shatq - \Sigmatildeq \B^*e_q, \Deltahat e_q\rangle}_{\text{I}} -
\underbrace{\lambda(\norm{\Bhatsp}_{2,1} - \norm{\B^*}_{2,1})}_{\text{II}}$$

Recall that $\nabla \cL (\B^*)$ has its $q_{th}$ column equal to $\Shatq - \Sigmatildeq \B^*e_q$. So we can rewrite I as 

$$\text{I} = \langle\nabla \cL(\B^*), \Deltahat \rangle \leq \norm{\nabla \cL(\B^*)}_{2,\infty}\norm{\Deltahat}_{2,1} \lesssim \lambda \sqrt{s}\norm{\Deltahat}_F$$

where the first inequality is Holder's, and the second uses the fact that we are conditioned on $\cA_1$ and $\norm{\Delta}_{2,1} \lesssim \sqrt{s}\norm{\Delta}_F$ for $\Delta \in \cC_{\alpha}(S)$. To control II, we apply Lemma 9.14 in \cite{wainwright_high-dimensional_2019} and conclude 

$$\text{II} \lesssim \lambda \sqrt{s}\norm{\Deltahat}_F$$

which gives 

$$\sum_{q = 1}^Q\norm{(\Sigmatildeq)^{1/2}\Deltahat e_q}_2^2 \lesssim \lambda \sqrt{s}\norm{\Deltahat}_F$$

Since we are conditioned on $\cA_2$ as well, the left hand side of the above inequality is bounded below by $\frac{1}{\kappa}\norm{\Deltahat}_F^2$. This yields

$$\norm{\Bhatsp - \B^*}_F \lesssim \lambda \sqrt{s}$$

    For our choice of $\lambda = O(\gamma (Q + \log p ) / n_{\min} + \norm{\Xi}_{2,\infty})$, we have 

$$\norm{\Bhatsp - \B^*}_F \lesssim \sqrt{\frac{s\gamma(Q + \log p) }{n_{\min}}} + \sqrt{s}\norm{\Xi}_{2,\infty}$$

It remains to show that $\cA_1 \cap \cA_2$ occurs with high probability. We know by Assumption 3.3 that $\cA_2$ occurs with probability at least $1 - a_N$. To analyze $\cA_1$, recall that Lemma \ref{lemma:LB-dist} tells us us that the $(i,q)_{th}$ entry of $\nabla \cL(\B^*)$ has a $\subE(v^2_q, \alpha_q)$ distribution. Letting $\ell_i$ denote the $i_{th}$ row of $\nabla\cL(\B^*)$, it follows that $\ell_i \in \subE_Q(v^2,\alpha)$ for $v^2 = \max_qv^2_q$ and $\alpha = \max_q\alpha_q$. Using this, an application of Lemma \ref{lemma:subE-norm-bound} and the union bound gives us 

\begin{align*}
    \prob{\norm{\nabla \cL (\B^*) - \ev{\nabla \cL(\B^*)}}_{2,\infty} \geq t} &= \prob{\max_{i \in [p]}\norm{\ell_i - \ev{\ell_i}}_2 \geq t} \\
    &\leq \sum_{i = 1}^p\prob{\norm{\ell_i - \ev{\ell_i}}_2 \geq t} \\
    &\leq \sum_{i= 1}^pC_1 \exp[C_2(Q - \min(t^2/\nu^2, t/\alpha))] \\
    &= \exp[C(\log p + Q - \min(t^2/\nu^2, t/\alpha))]
\end{align*}

A direct calculation reveals that $\ev{\nabla \cL(\B^*)} = \Xi$. So, using our choice $t = \lambda = C(\sqrt{\gamma(Q + \log p) / n_{\min}} + \norm{\Xi}_{2,\infty})$ yields the desired result, as long as $n_{\min} \wedge \tilde{n}_{\min} \geq c \norm{\B^*}_{\infty, \infty}(Q + \log p)$.

\end{proof}

\subsubsection{Proof of Theorem 3.2 %\ref{thm:convergence-low-rank}
}

\begin{proof}
Define the events

$$\cB_1 = \set{\frac{\lambda}{2} \geq \norm{\nabla \cL(\B^*)}_{\op}}$$

$$\cB_2 = \set{\sum_{t = q}^Q\norm{(\Sigmatildeq)^{1/2}\Delta e_q}_2^2 \geq \frac{1}{\kappa}\norm{\Delta}_F^2 \quad \forall \Delta \in \cC_{3}(\bbM^*)}$$

Conditional on $\cB_1 \cap \cB_2$, an analysis identical to that given in the proof of Theorem 3.1 grants us 

$$\norm{\Bhatlr - \B^*}_F \lesssim \lambda \sqrt{r}$$

which admits

$$\norm{\Bhatlr - \B^*}_F \lesssim \sqrt{\frac{r\xi (T + p)}{n_{\min}}} + \sqrt{r}\norm{\Xi}_{\op}$$

due to our choice of $\lambda$. By assumption 3.5, we know that $\cB_2$ occurs with probability at least $1 - b_N$. The fact that $\cB_1$ holds with high probability follows directly from Lemmas \ref{lemma:LB-dist} and \ref{lemma:subE-op-norm-bound}.

\end{proof}

\subsection{Proofs of Results in Section 4}

\subsubsection{Proof of Proposition 4.1}

\begin{proof}
    
Since $\cP$ is convex, the estimator $\Bhat$ satisfies the following first-order condition:

\[\nabla \cL(\Bhat) + \lambda \widehat{\mathbf{Z}} = 0\]

where $\widehat{\mathbf{Z}}$ lies in the sub-gradient of $\cP$ at $\Bhat$. Subtracting $\nabla L (\B^*)$ from both sides and rearranging terms grants us

\[\nabla \cL(\Bhat) - \nabla \cL (\B^*) = - \lambda \widehat{\mathbf{Z}} - \nabla \cL (\B^*)\]

Applying $\cP^*$ to both sides, we get 

\begin{align*}
    \cP^*(\nabla \cL(\Bhat) - \nabla \cL (\B^*)) 
    &= \cP^*(- \lambda \widehat{\mathbf{Z}} - \nabla \cL (\B^*)) \\
    &\leq \lambda \cP^*(\widehat{\mathbf{Z}}) + \cP^*(\nabla \cL(\B^*)) \\
    &\leq \lambda + \frac{\lambda}{2}
\end{align*}

where the first inequality uses the triangle inequality, and the second uses the properties of sub-gradients and the event $\cA(\lambda)$.

\end{proof}

\subsubsection{Proof of Theorem 4.1}

This basically follows the proof of Theorem 1 in \cite{chichignoud_practical_2016}. We repeat the proof for completeness. 

\begin{proof}

Condition on $\cA(\lambda^*_{\delta})$. 

We first prove that $\hat{\lambda} \leq \lambda^*_{\delta}$. We proceed by contradiction: suppose that $\hat{\lambda} > \lambda^*_{\delta}$. Then by the definition of $\hat{\lambda}$, there exist $\lambda', \lambda '' \geq \lambda^*_{\delta}$ such that 

\[\cP^*(\nabla \cL(\Bhat_{\lambda'}) - \nabla \cL(\Bhat_{\lambda''})) > \bar{C}(\lambda' + \lambda '')\]

Since $\cA(\lambda')$ and $\cA(\lambda'')$ are subsets of $\cA(\lambda^*_{\delta})$, Proposition 4.1 tells us that the following inequalities hold:

\[\cP^*(\nabla \cL(\Bhat_{\lambda'}) - \nabla \cL (\B^*)) \leq C\lambda'\]
\[\cP^*(\nabla \cL(\Bhat_{\lambda''}) - \nabla \cL (\B^*)) \leq C\lambda''\]

So we have 

\begin{align*}
    \cP^*(\nabla \cL(\Bhat_{\lambda'}) - \nabla \cL(\Bhat_{\lambda''}))
    &\leq \cP^*(\nabla \cL(\Bhat_{\lambda'}) - \nabla \cL (\B^*)) + \cP^*(\nabla \cL(\Bhat_{\lambda''}) - \nabla \cL (\B^*)) \\
    &\leq C(\lambda' + \lambda '')
\end{align*}

But $\bar{C} \geq C$, so this is a contradiction. Hence $\hat{\lambda} \leq \lambda_{\delta}^*$.

Now we prove the second claim. Since we are still conditioned on $\cA(\lambda_{\delta}^*)$, we know that $\hat{\lambda} \leq \lambda_{\delta}^*$. So applying the definition of $\hat{\lambda}$, we have

\[\cP^*(\nabla \cL(\Bhat_{\hat{\lambda}}) - \nabla \cL(\Bhat_{\lambda_{\delta}^*})) \leq \bar{C}(\hat{\lambda} + \lambda^*_{\delta}) \leq 2\bar{C}\lambda_{\delta}^*\]

Using this result, we apply the triangle inequality to yield

\begin{align*}
    \cP^*(\nabla \cL(\Bhat_{\hat{\lambda}}) - \nabla \cL(\B^*))
    &\leq \cP^*(\nabla \cL(\Bhat_{\hat{\lambda}}) - \nabla \cL(\Bhat_{\lambda^*_{\delta}})) + \cP^*(\nabla \cL(\Bhat_{\lambda_{\delta}^*}) - \nabla \cL(\B^*)) \\
    &\leq 2\bar{C}\lambda_{\delta}^* + C\lambda_{\delta}^* \\
    &\leq C^* \lambda_{\delta}^*
\end{align*}

where $C^* \geq \bar{C}$. 

Since $\cA(\lambda_{\delta}^*)$ occurs with probability at least $1 - \delta$, this completes the proof.

\end{proof}

\section{Simulations}

All simulations, including those in the main text, were run in R version 4.0.2 on a Linux machine with an Intel i5 processor. We implemented all of the estimators using a straightforward proximal gradient descent algorithm with step size fixed at 1e-3.

For our first additional simulation, we compare the multi-task learning estimators to analogous single-task estimators. Specifically, we compare the sparse multi-task estimator defined in the main text to the LASSO computed using only proxy data, and we compare the low-rank estimator to Ridge regression also computed with proxy data. We generate synthetic Gaussian data with  $n_{\min} = 100, \tilde{n}_{\min} = 150, p = 100$ and $\tilde{\rho} = 0$ so that there is no overlap between summary statistics. For the sparse estimators, we generate the $\B^*$ matrix with $10$ nonzero rows, and for the low-rank estimators, we generate $\B^*$ with a rank of 2. For both comparisons, we generate the columns of $\B^*$ distinctly to model heterogeneity between tasks. Finally, we consider two versions of the LASSO/Ridge estimators. The first variation pools all of the data across tasks into one dataset, and computes an estimate $\hat{\beta}$. This approach completely ignores heterogeneity between tasks. The second variation models each task separately without consider structural similarities between tasks. We choose these two variations to demonstrate the effectiveness of multi-task learning in situations when tasks should be modeled separately, but there is some shared structure between them.

The results of this simulation are shown in Figure \ref{fig:comparison}. Clearly the multi-task estimators outperform the single-task analogs, which is expected given the data generating process. This suggests that multi-task learning should be preferred for integrating data from similar but distinct sources.

\begin{figure}[ht!]

\centering

    %\begin{subfigure}
    \centering
    \includegraphics[scale =0.35]{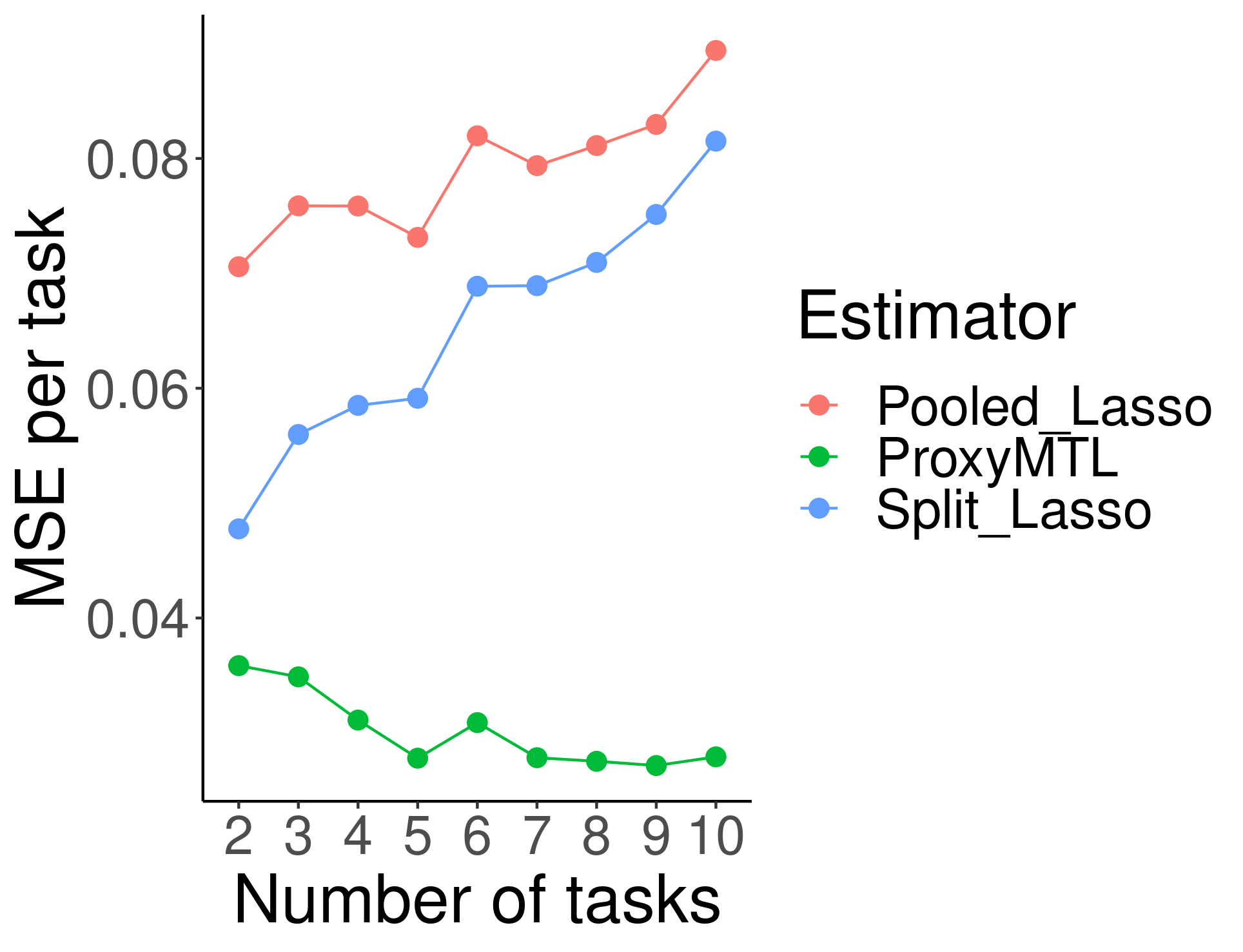}
    %\end{subfigure}
    %\begin{subfigure}
    \centering
    \includegraphics[scale =0.35]{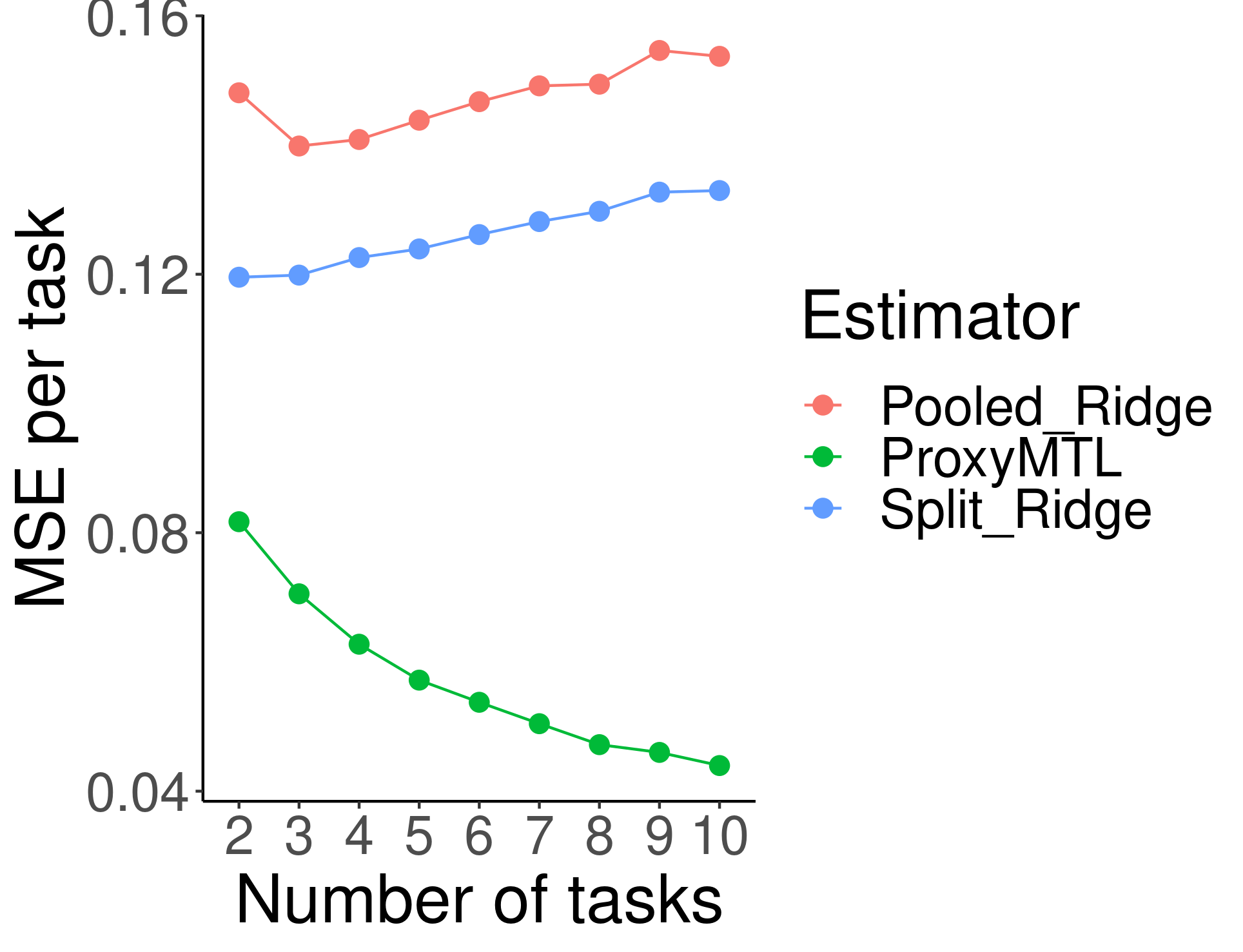}
    %\end{subfigure}

    \caption{Average MSE per task after 100 repetitions plotted against the number of tasks. The plot on the left-hand side corresponds to the sparse estimator, and the figure on the right is the low-rank estimator. In both plots, the `ProxyMTL' line corresponds to our multi-task learning method. The 'Pooled Lasso' and 'Pooled Ridge' correspond to the estimators fit by pooling all the data across the tasks, and the `Split Lasso' and `Split Ridge' correspond to estimators fit on each task separately. The MSE per task is computed by summing up the MSE accumulated across all of the tasks, then dividing by the number of tasks.}
    \label{fig:comparison}
    
\end{figure}

Next, we consider the effect of a misspecified proxy data set on the MSE per task. This simulation is intended to study the effect of distributional shifts between the $\Xq$ and $\Xtildeq$ matrices for each task on the downstream performance of our estimators. This simulation setup is the same as the last one, except we draw $\Xq$ from a normal distribution with covariance $\Sigma_1$ and $\Xtildeq$ from a normal distribution with covariance $\Sigma_2$, where we vary $\norm{\Sigma_1 - \Sigma_2}_F \in \set{10;20;50;100}$.

The results of this simulation are found in Figure \ref{fig:misspec}. We can clearly see that the proxy data should be well-specified to decrease the MSE per task, as expected.

\begin{figure}[ht!]

\centering

    % \begin{subfigure}
    \centering
    \includegraphics[scale =0.35]{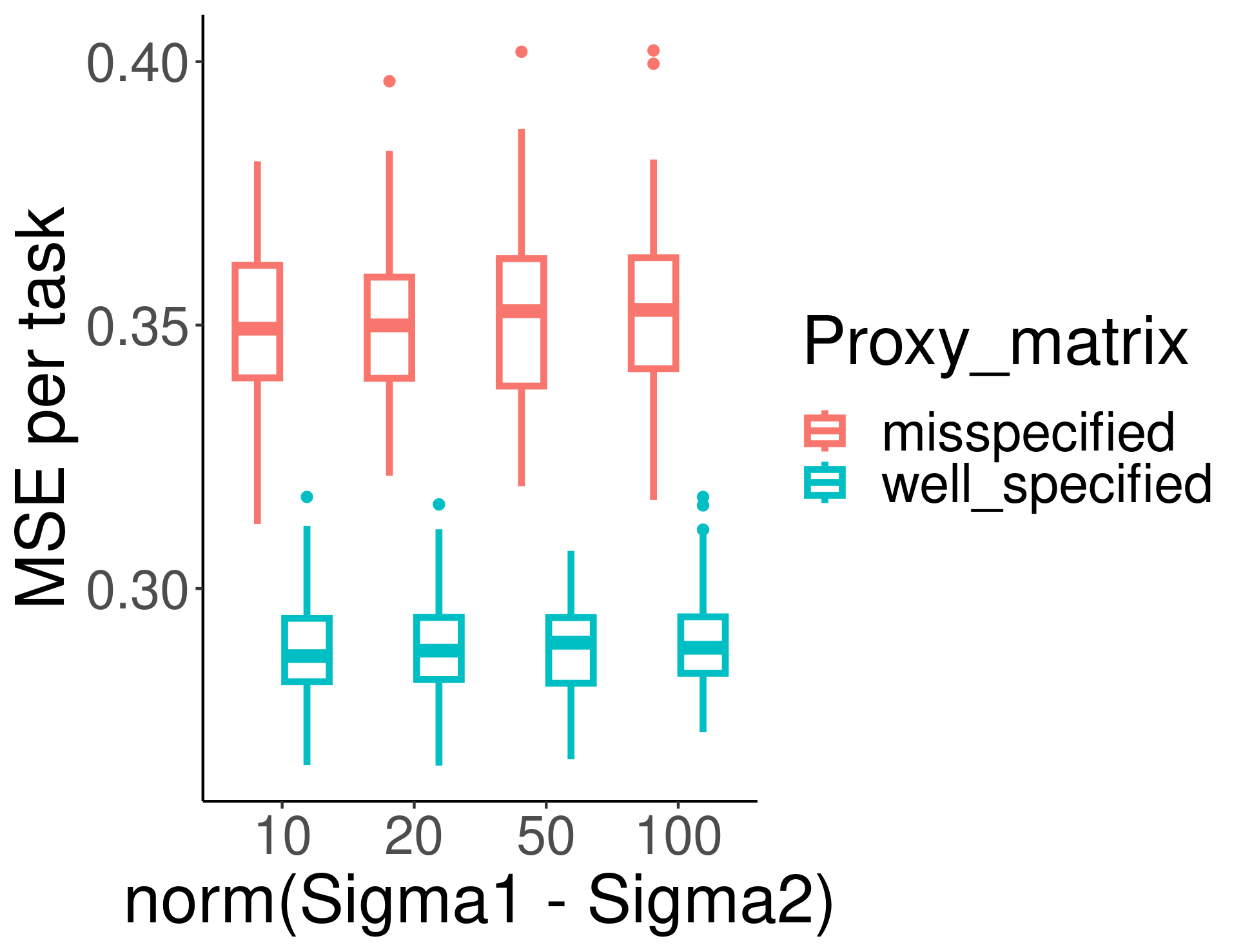}
    % \end{subfigure}
    % \begin{subfigure}
    \centering
    \includegraphics[scale =0.35]{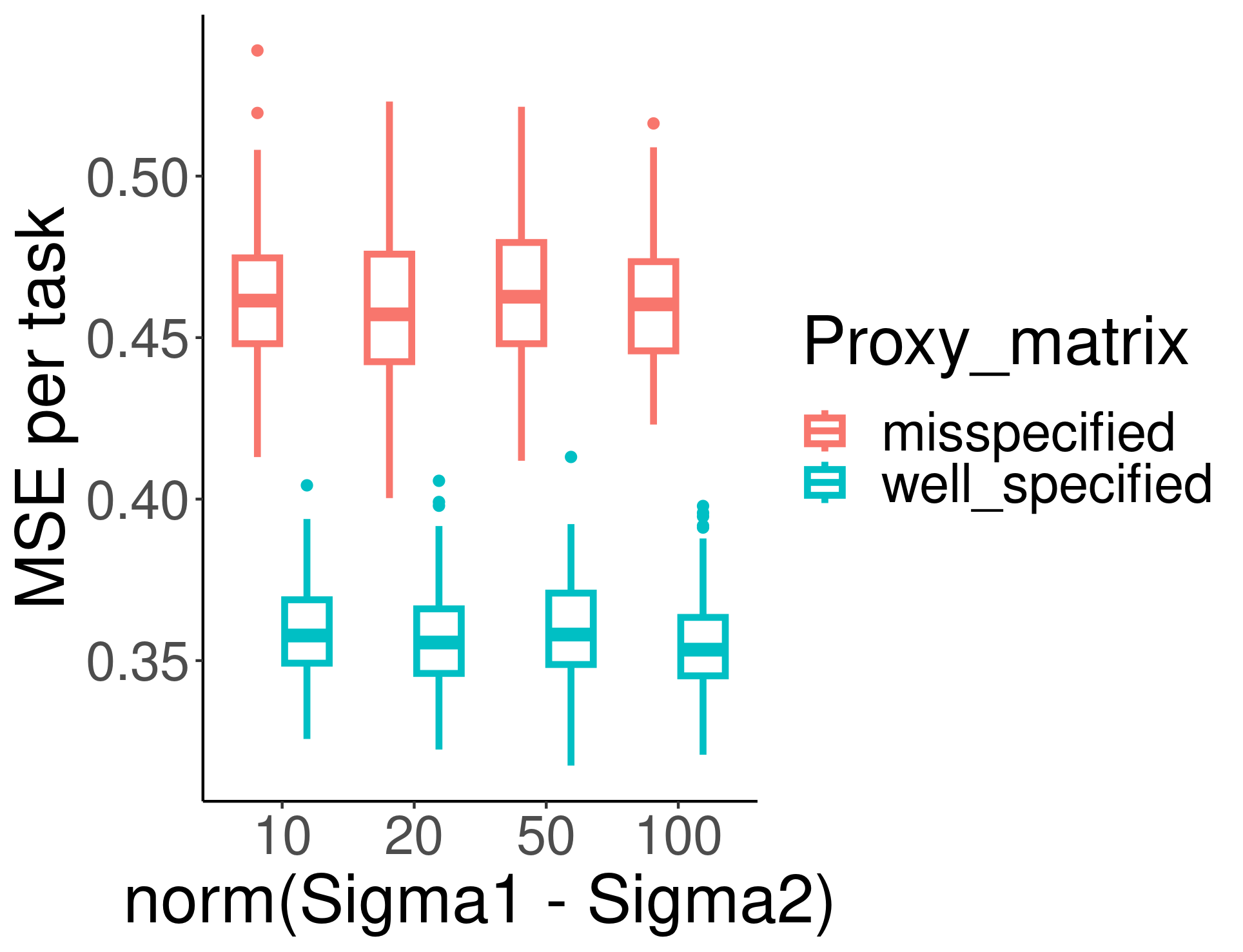}
    % \end{subfigure}

    \caption{Average MSE per task after 100 repetitions plotted against $\norm{\Sigma_1 - \Sigma_2}_F$. The left plot corresponds to the sparse estimator and right plot corresponds to the low-rank estimator.}
    \label{fig:misspec}
    
\end{figure}

Finally, we compare our adaptive tuning procedure to a hold-out validation procedure that uses a small amount of individual-level data each task. The hold-out validation scheme assumes that we have access to a dataset $(X_{\text{tune}}^{(q)}, Y_{\text{tune}}^{(q)})$ for each task $q \in [Q]$, and chooses $\lambda \in \Lambda$ such that it minimizes $\sum_{q = 1}^Q\norm{Y^{(q)}_{\text{tune}} - X^{(q)}_{\text{tune}}\Bhat_{\lambda}e_q}^2_2$, where $\Bhat_{\lambda}$ is computed using the proxy data set which is independent from $(X_{\text{tune}}^{(q)}, Y_{\text{tune}}^{(q)})_{q \in [Q]}$. This hold-out tuning procedure is often used in practice, especially in statistical genetics, whenever such a dataset is available. However, when it comes to multi-task learning, obtaining validation data for all $Q$ tasks can pose a significant challenge. Fortunately, our adaptive tuning procedure provides a compelling alternative that overcomes this obstacle.

We present the results of our simulations in Figure \ref{fig:tuning}. In these simulations, we vary the sample size of the hold-out dataset from 10 to 100. The y-axis is the average MSE per task of the estimator computed using the tuning parameter chosen by each of the two methods. Furthermore, we have pooled the hold-out data with the proxy data in computing the estimator with the adaptive validation method, to emphasize that adaptive validation is able to take full advantage of the data at hand without needing an additional set of tuning data. This adaptive method offers comparable performance to hold-out tuning, since pooling the data increases sample size as well as overlap between $\Shatq$ and $\Sigmatildeq$ for each $q$. The performance of adaptive tuning improves as the amount of hold-out data increases, as expected.

\begin{figure}[ht!]

\centering

    % \begin{subfigure}
    \centering
    \includegraphics[scale =0.32]{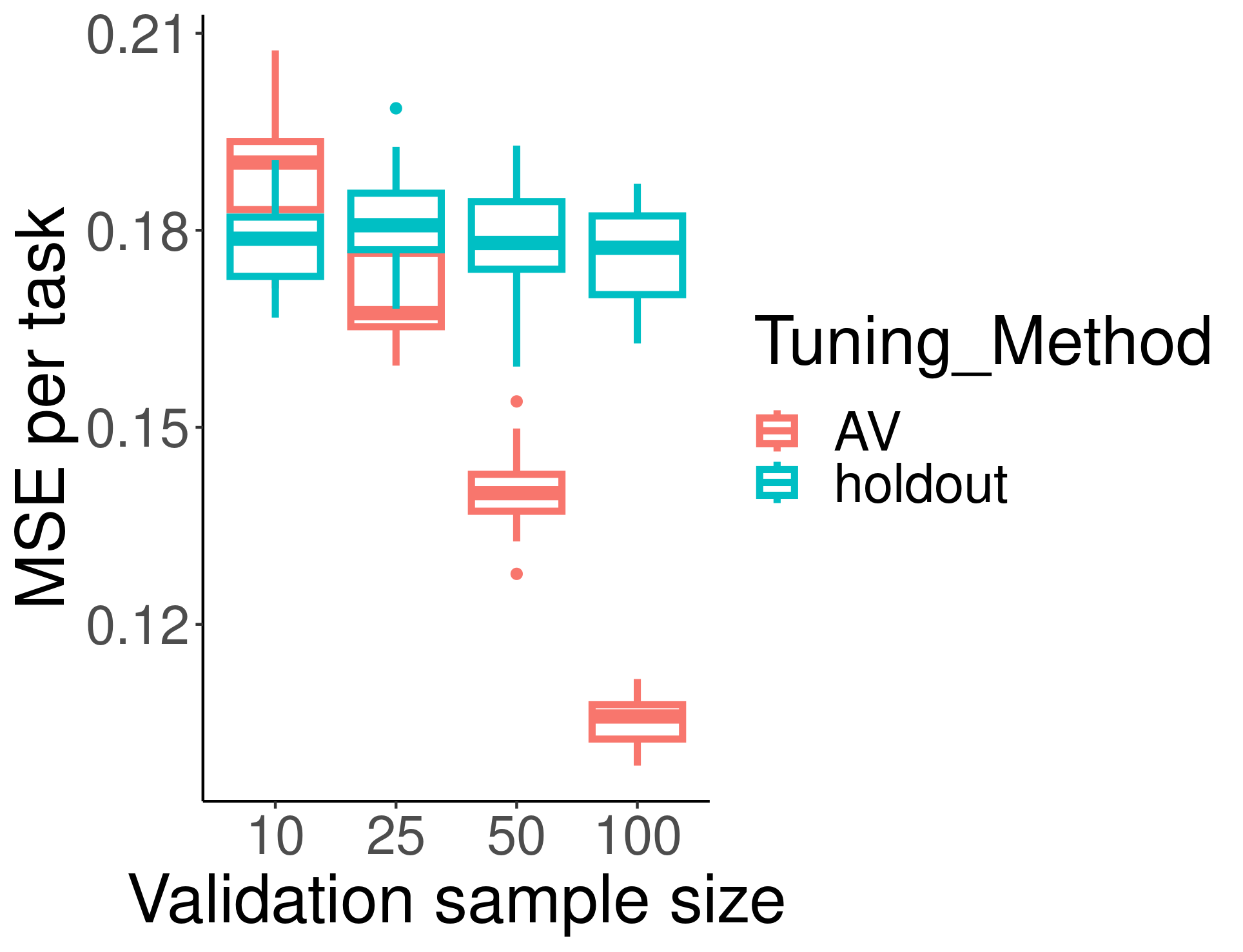}
    \label{fig:tuning_sparse}
    %\end{subfigure}
    %\begin{subfigure}
    \centering
    \includegraphics[scale =0.32]{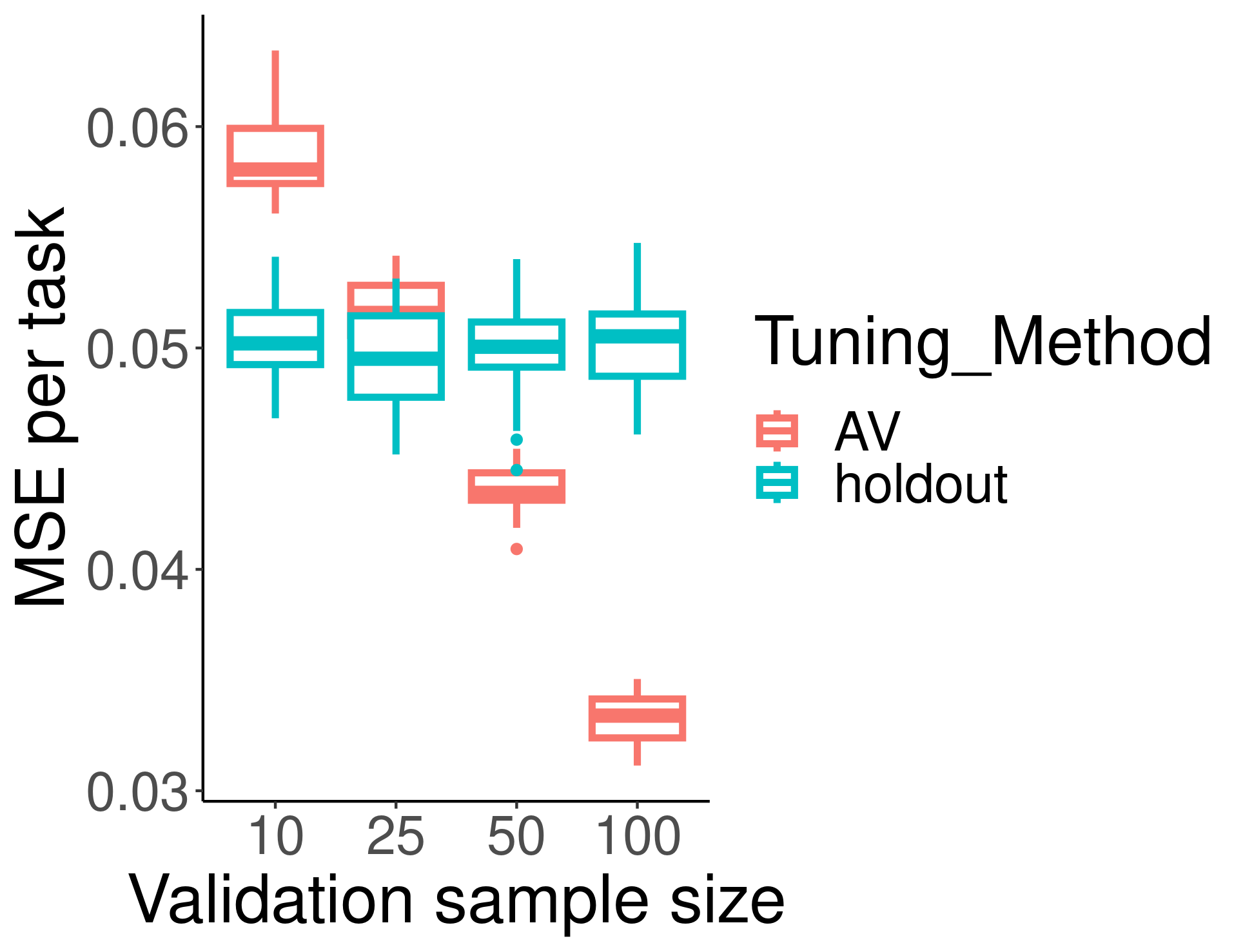}
    \label{fig:tuning_lowrank}
    %\end{subfigure}

    \caption{Average MSE per task after 100 repetitions plotted against the choice of tuning method. 'AV' standards for adaptive validation, which refers to our method outlined in the main manuscript with $\bar{C} = 1$. The label 'holdout' refers to the hold-out validation method outlined above. The figure on the left-hand side gives the results for the sparse estimator, and the low-rank estimator is on the right.}
    \label{fig:tuning}
    
\end{figure}

\end{spacing}

\newpage
\printbibliography

\end{document}